\documentclass{ieee-ojvt}
\usepackage{amsmath,amsfonts,amsthm,mathrsfs}
\usepackage{algorithmic}
\usepackage{algorithm}
\usepackage{array}
\usepackage[caption=false,font=normalsize,labelfont=sf,textfont=sf]{subfig}
\usepackage{textcomp}
\usepackage{stfloats}
\usepackage{url}
\usepackage{verbatim}
\usepackage{graphicx}
\usepackage{cite}
\usepackage{makecell}
\usepackage{graphicx}
\hyphenation{op-tical net-works semi-conduc-tor IEEE-Xplore}
\usepackage{booktabs}
\usepackage{bm}
\usepackage{comment}

\newtheorem{theorem}{Theorem}

\newtheorem{definition}{Definition}
\newtheorem{assumption}{Assumption}
\def\BibTeX{{\rm B\kern-.05em{\sc i\kern-.025em b}\kern-.08em
    T\kern-.1667em\lower.7ex\hbox{E}\kern-.125emX}}
\AtBeginDocument{\definecolor{ojcolor}{cmyk}{0.93,0.59,0.15,0.02}}
\AtBeginDocument{\definecolor{ojcolor}{cmyk}{0.93,0.59,0.15,0.02}}

\begin{document}

\receiveddate{2 July, 2025}
\reviseddate{XX July, 2025} 

\title{RALLY: Role-Adaptive LLM-Driven Yoked Navigation for Agentic UAV Swarms}

\author{Ziyao Wang\authorrefmark{1,3}, 
Rongpeng Li\authorrefmark{2} (Senior Member, IEEE),
Sizhao Li\authorrefmark{2},
Yuming Xiang\authorrefmark{2},
Haiping Wang\authorrefmark{3},
Zhifeng Zhao\authorrefmark{3,2} (Senior Member, IEEE),
and Honggang Zhang\authorrefmark{4} (Fellow, IEEE)}
\IEEEspecialpapernotice{(Invited Paper)}

\affil{Hangzhou Institute for Advanced Study, UCAS, Hangzhou 311500, China}
\affil{Zhejiang University, Hangzhou 310027, China}
\affil{Zhejiang Lab, Hangzhou 310012, China}
\affil{Macau University of Science and Technology, Macau, China}
\corresp{Emails: wangziyao23@mails.ucas.ac.cn, \{lirongpeng; liszh5; xiangym1999\}@zju.edu.cn, \{wanghp, zhaozf\}@zhejianglab.org, honggang.zhang@ieee.org. Corresponding Author: Rongpeng Li.}
\markboth{RALLY: Role-Adaptive LLM-Driven Yoked Navigation for Agentic UAV Swarms}{Z. Wang \textit{et al.}}


\begin{abstract}
Intelligent control of Unmanned Aerial Vehicles (UAVs) swarms has emerged as a critical research focus, and it typically requires the swarm to navigate effectively while avoiding obstacles and achieving continuous coverage over multiple mission targets. 
Although traditional Multi-Agent Reinforcement Learning (MARL) approaches offer dynamic adaptability, they are hindered by the semantic gap in black-boxed communication and the rigidity of homogeneous role structures, resulting in poor generalization and limited task scalability.
Recent advances in Large Language Model (LLM)-based control frameworks demonstrate strong semantic reasoning capabilities by leveraging extensive prior knowledge. 
However, due to the lack of online learning and over-reliance on static priors, these works often struggle with effective exploration, leading to reduced individual potential and overall system performance. 
To address these limitations, we propose a Role-Adaptive LLM-Driven Yoked navigation algorithm RALLY. 
Specifically, we first develop an LLM-driven semantic decision framework that uses structured natural language for efficient semantic communication and collaborative reasoning. 
Afterward, we introduce a dynamic role-heterogeneity mechanism for adaptive role switching and personalized decision-making. 
Furthermore, we propose a Role-value Mixing Network (RMIX)-based assignment strategy that integrates LLM offline priors with MARL online policies to enable offline training of role selection strategies.
Experiments in the Multi-Agent Particle Environment (MPE) and a Software-In-The-Loop (SITL) platform demonstrate that RALLY outperforms conventional approaches in terms of task coverage, convergence speed, and generalization, highlighting its strong potential for collaborative navigation in agentic multi-UAV systems.
\end{abstract}
\begin{IEEEkeywords}
Intelligent UAV swarm control, large language model, multi-agent reinforcement learning, role-heterogeneous network, agentic AI.
\end{IEEEkeywords}
\maketitle

\section{Introduction}
\IEEEPARstart{N}{owadays}, Unmanned Aerial Vehicles (UAVs) have demonstrated significant potential in multi-agent pursuit-evasion game (e.g., disaster responses \cite{uzakov2020uav}). Typically,  the UAV swarm shall have the capability to avoid pursuing enemies and environmental obstacles while moving towards multiple target areas in certain formations, which is often known as a Dynamic Swarm coordination with Cooperative Evasion and Formation Coverage (DS-CEFC) task \cite{CIHRL2025}. 
By enabling UAVs to adjust their functions in response to real-time environmental conditions, dynamic role adaptation often leads to optimized coordination, improved task coverage, and enhanced robustness in complex and unpredictable scenarios \cite{Xia2023DynamicRD,Jin2024MultiAgentSL}.
Nevertheless, the underlying difficulty in coordinating roles and decision-making across multiple agents in swarms poses significant challenges. 
For example, traditional control-based algorithms \cite{vasarhelyi2014outdoor,lizzio2022review} are contingent on fully connected topologies and lack the required adaptability and scalability to large-scale dynamic environments, thus degrading the practicality in the real world \cite{yan2022relative}. 
Meanwhile, despite the incorporation of inter-agent communications \cite{lu_rethinking_2023,lu_selfcritical_2025} and cooperation \cite{sunehag2017value,das2020tarmac}, it still remains inevitable for decentralized Multi-Agent Reinforcement Learning (MARL) to yield conflicting roles and decisions  \cite{tan1993multi,oroojlooy2023review}. 
Given the inherent semantic reasoning capabilities and pretrained experience of Large Language Models (LLMs) \cite{xi2023rise,javaid2024large}, they offer a promising alternative to mitigate the critical issues of conflicting roles and inconsistent decision-making that plague purely MARL-based approaches.
\subsection{Related Works}
\subsubsection{MARL-Based UAV Swarm Control}
Deep Reinforcement Learning (DRL)\cite{mnih2015human} has significantly enhanced agent adaptability in complex tasks. 
However, in Multi-Agent Systems (MAS), critical challenges, such as environmental non-stationarity, rapidly expanding state spaces, and difficulty in credit assignment, hinder the ability of traditional methods to learn effective cooperative policies. 
To address this problem, the Centralized Training with Decentralized Execution (CTDE) paradigm is introduced with exemplary algorithms like MADDPG\cite{MADDPG} and  MAPPO\cite{MAPPO}. During training, it also involves techniques such as policy distillation\cite{green2019distillation} and imitation learning\cite{2017Imitation} to improve coordination in complex obstacle-laden environments. 
Nevertheless, these approaches struggle to assess an individual agent's contribution, as they typically optimize a global value function while neglecting the importance of localized utility. 
Value decomposition methods (e.g., VDN\cite{sunehag2017value}, QMIX\cite{QMIX}, QTRAN\cite{son2019qtran}) ameliorate this issue by decomposing the joint value function, thereby enabling analysis of individuals' contributions to cooperative decision-making. Furthermore, more advanced attention mechanisms or neural communication protocols (e.g., TarMAC\cite{das2020tarmac}, IMANet\cite{zou2022cooperative}, DAACMP\cite{mao2020learning}) can boost the effectiveness of filter messages. 
However, the direct communication of numerical vectors\cite{das2020tarmac,zou2022cooperative,mao2020learning}, which lacks interpretability and cannot convey task semantics, leads to information redundancy and bandwidth bottlenecks \cite{lu_rethinking_2023,lu_selfcritical_2025,lu_semanticsempowered_2024,ren_separate_2025}, and greatly limits algorithm generalization. 
Although the leader–follower architecture\cite{Alam2024LeaderFollowerFS} enables role differentiation, its static role assignments lack the flexibility to adapt to dynamic environmental conditions and varying formation sizes.
Therefore, researchers resort to integrating hierarchical control with consensus inference\cite{pateria2021hierarchical,cheng2022multi,CIHRL2025}, so as to simplify the inference interpretability and boost convergence speed. However, agents in these methods commonly remain homogeneous, and the policy networks learned via CTDE cannot inherently leverage the advantages of the natural heterogeneity in swarms. 
Therefore, building a scalable, efficient, dynamically adaptive, and interpretable heterogeneous multi-agent collaboration mechanism for UAV swarm control remains an open problem.
\subsubsection{LLM-Based Multi-Agent Systems}
LLMs have demonstrated near-human performance\cite{xi2023rise} in complex reasoning and planning tasks, providing new impetus for environment understanding and decision explainability in UAV swarms\cite{javaid2024large}. 
Leveraging vast prior knowledge, LLMs can not only be used for single-agent path planning (e.g., CoNavGPT\cite{yu2023co}, RoCo\cite{mandi2024roco}), but also facilitate high-level communication with low-level trajectory planning, significantly improving task efficiency, adaptability, and generalization\cite{du2023improving,liang2023encouraging,chan2023chateval}. Moreover, in MetaGPT\cite{hong2023metagpt}, CAMEL\cite{li2303camel} and ChatDev\cite{qian2023communicative}, LLMs can decompose complex missions into a number of subtasks handled collaboratively by different agents, thereby reducing ``hallucinations" \cite{zhang2023siren} and enhancing the ability to solve complex problems.
More importantly, these LLM-driven Multi-Agent (LLM-MA) systems\cite{zeng2024persllm,chuang2023simulating,chen2023multi} often customize LLMs into specialized or personalized roles. Therefore, through multi-agent collaboration, they replicate human-like collective intelligence and further enhance overall situational understanding and decision explainability, making inter-agent interactions more meaningful. However, LLM-based decision-making still heavily relies on prior knowledge and lacks exploration, often getting stuck in local optima\cite{Eigner2024DeterminantsOL}. Additionally, these methods leverage individual memory and self-evolution mechanisms to optimize agents independently, while neglecting potential synergistic effects arising from multi-agent interactions.
Furthermore, applying LLM-MA to build global consensus for DS-CEFC remains scarce, though some studies\cite{chen2023multi} start to focus on consensus negotiation among agents in a simplified scenario. 
On the other hand, some approaches\cite{zeng2024persllm} utilize diverse ``personalities", empowered by the LLM, for creative collaboration; nevertheless, the underlying fixed definitions of roles still struggle to accommodate dynamic task switching in DS-CEFC. 
In other words, enabling dynamic role adaptation with balanced exploration and exploitation remains a critical challenge for DS-CEFC.
\subsubsection{Integration of MARL and LLM}
There has been some light shed on deeply integrating MARL and LLMs\cite{Sun2024LLMbasedMR}. 
Prominently, semantic capabilities in LLMs can be leveraged as natural language interfaces to bridge human feedback and MAS \cite{Siedler2025LLMMediatedGO}, but the consensus inference therein often relies more on human supervision than on autonomous collaboration. 
Other studies distill LLM knowledge into smaller executable models or empower LLMs with human-in-the-loop feedback for policy generation, so as to accelerate MARL training and improve performance in complex environments \cite{Zhang2023BuildingCE,Sumers2023CognitiveAF,10577381}. Nevertheless, these methods often depend on offline human annotations and feedback, making them ill-suited for dynamic real-time changes; meanwhile, agent roles remain homogeneous, with only the number of agents being scaled up.
Some recent work treats LLMs as the core of heterogeneous agents in MARL\cite{xi2023rise}, assigning different ``personalities'' to agents. But these roles tend to be fixed and cannot adapt to environmental changes in DS-CEFC. 
Additionally, deploying LLMs in UAV swarms significantly strains computational resources and energy, limiting their practical use \cite{Yuan2024YOLOMARLYO}. 
Thus, it is meaningful to calibrate superior means to integrate MARL and LLMs, thus better harnessing heterogeneous swarm intelligence for improved generalization and adaptability in DS-CEFC.
\subsection{Contributions}
To accomplish consensus inference of multiple UAVs with heterogeneous roles for DS-CEFC, 
we propose a novel LLM-MARL-integrated framework called RALLY (Role-Adaptive LLM-Driven Yoked navigation). 
Built on our previous work\cite{CIHRL2025}, which unifies target selection, obstacle-avoidance navigation, and flight-control execution, RALLY serves as a significantly enhanced high-level consensus inference module towards establishing role-adaptive cooperative navigation. 
Specifically, RALLY comprises two core modules: an LLM-based two-stage semantic reasoning module and a Role-value Mixing Network (RMIX)-based credit-distribution mechanism.
The integration of semantic reasoning in LLM and policy learning in MARL makes it qualified for coordinating roles and decision-making across swarms for DS-CEFC. 
The main contributions of this work can be summarized as follows:
\begin{itemize}
\item We introduce RALLY, which consists of an LLM-driven semantic reasoning architecture for goal inference alongside a RMIX-based credit-distribution mechanism for role selection. RALLY accelerates consensus formation, improves convergence speed, and optimizes cooperative behaviors within the UAV swarm to fulfill DS-CEFC.
\item We implement a two-stage LLM-based semantic reasoning of intention and consensus inference. Replacing traditional numerical vector-based methods with more interpretable text contributes to the consensus inference efficiency.
\item Unlike static role assignment, RMIX dynamically assigns agents' roles during cooperative navigation.
By integrating offline LLM experiences with online MARL training, RMIX optimizes credit assignment and accelerates group coordination across diverse scenarios. 
\item To meet the distributed deployment and parallel inference demands on edge devices, we propose a capacity-migration algorithm that significantly reduces runtime memory footprint. 
We validate RALLY in the Multi-Particle Environment (MPE)\cite{lowe2017multi} and the Software-In-The-Loop (SITL) platform\cite{DangNguyen2019VisionBasedSF} based on Gazebo-ROS-PX4. 
Experimental results demonstrate that RALLY outperforms the latest MARL\cite{CIHRL2025} and pure LLM-driven approaches\cite{yu2023co,ditto2024large} in terms of task completion, convergence speed, generalization, and interpretability.
\end{itemize}
\subsection{Paper Structure}
The remainder of this paper is organized as follows. 
Section \ref{sec:model} introduces the system model and formulates the problem.  
Section \ref{sec:rally} outlines the algorithmic framework of RALLY by elaborating on the RMIX role assignment mechanism and context-migration algorithm. Section \ref{sec:results} presents experimental results and discussion. 
Finally, Section \ref{sec:conclusion} summarizes this work with possible research directions. 
\section{System Model and Problem Formulation}
\label{sec:model}
\subsection{System Model}
Beforehand, commonly used variables are summarized in Table \ref{tab:symbols}.
\begin{table}[t]
    \centering
    \caption{Mainly used notations in this paper.}
    \label{tab:symbols}
    \begin{tabular}{cl}
    \toprule
    \textbf{Symbol}  & \textbf{Description} \\
    \midrule
    $n$            & Number of UAVs \\
    $\mathcal{N}$  & Set of UAVs \\
    $\mathcal{T}$  & Set of target regions \\
    $\bm{p}_\text{tr}$, $\kappa_\text{tr}^t$ & Target positions/urgency \\
    $\bm{p}_e^t$, $\bm{v}_e^t$ & Enemy's position/velocity\\
    $\bm{p}_i^t$, $\bm{v}_i^t$ & Agent $i$'s position/velocity \\
    $\bm{s}^t$  & Global state \\
    \hline
    $\mathcal{N}_i^t$  & Set of agent $i$'s neighbors at time $t$ \\
    $\bm{o}_i^t$  & Local Observation of agent $i$\\
    $\bm{z}_i^t$  & Local information of agent $i$ \\
    ${k}_{i}^{t}$ & Role assignment of agent $i$\\
    $\mathcal{G}$ & Set of candidate target goals\\
    ${g'}_{i}^{t}$ & Initial goal intention of agent $i$\\
    $g_{i}^{t}$ & Final consensus goal of agent $i$\\
    \hline
    $\text{RL}_\text{HI}$ & Credit-assignment role selection   \\
    $\text{LLM}_\text{HC}$ & Two-stage LLM-based consensus inference \\
    $\text{LLM}_{\text{init}}$ & Initial intent generation policy   \\
    $\text{LLM}_{\text{cons}}$ & Consensus-refinement policy         \\
    $\text{X}_\text{task}$ & Task instruction prompt            \\
    $\text{Y}_\text{init}$ & Initial LLM prompt                 \\
    $\text{Y}_\text{cons}$ & Consensus-refinement LLM prompt    \\
    $\text{M}_\text{CoT}$ & Chain-of-Thought guidance prompts   \\
    \bottomrule
    \end{tabular}
    \end{table}

Hereinafter, we consider a DS-CEFC task, wherein $n$ communication-limited UAVs in set $\mathcal{N}$ cooperatively cover multiple target regions $\mathcal{T}$, with 2D positions $\bm{p}^t_\text{tr}$ and urgency levels $\kappa_\text{tr}^{t}$ that decreases over coverage time, $\text{tr} \in \mathcal{T}$, alongside the blocking from static obstacles and chasing from PPO-based \cite{PPO} adversarial ``enemy''. Therefore, during navigation, agents shall dynamically change the target region from a set of candidate target goals $\mathcal{G}^t \subset\mathcal{T}$ to balance evasion and coverage efficiency. Due to the partial observability and communication limitation within the observation range $\delta_\text{obs}$, as illustrated in Fig. \ref{DES-EFC}, each agent $i$ obtains an environmental observation as
\begin{equation} \label{eq:observation} \bm{o}_i^t := \{(\bm{p}_i^t, \bm{v}_i^t),\; (\bm{p}_e^t, \bm{v}_e^t), (\bm{p}_\text{tr}^t,\kappa_\text{tr}^{t})_{\text{tr}\in\mathcal{T}}\}, \end{equation} 
where $\bm{p}_i^t=[p_{x_i}^t,p_{y_i}^t]$ and $\bm{v}_i^t=[v_{x_i}^t,v_{y_i}^t]$ are the 2D position and velocity of agent $i$, respectively, while $\bm{p}_e^t=[p_{x_e}^t,p_{y_e}^t]$, $\bm{v}_e^t=[\bm{v}_{x_e}^t,\bm{v}_{y_e}^t]$ are those of enemy agent. Based on its local observation $\bm{o}_i^t$, agent $i$ first generates a target intention ${g}_i^t \in \mathcal{T}$ and role assignment $k_i^t \in \mathcal{K} =\{\texttt{Commander}, \texttt{Coordinator}, \texttt{Executor}\}$ (i.e., $\bm{a}_i^t = ({g}_i^t, k_i^t)$), and then exchanges these proposals and observation (denoted as $\bm{M}_\text{i}^t=(\bm{a}_i^t, \bm{o_i^t})$) within its neighborhood set $ \mathcal{N}_i^t \in \mathcal{N}$, ultimately forming a collective \emph{consensus} $g_i^t \in \mathcal{T}$ through semantic negotiation. Notably, different roles are associated with distinct decision-making strategies: \texttt{Commander} focuses on maximizing individual rewards through independent decision-making, tending to select points yielding the highest personal return;  \texttt{Coordinator} balances team and individual gains, giving priority to the \texttt{Commander}'s choices whenever necessary; \texttt{Executor} primarily adheres to the \texttt{Coordinator}'s guidance, reverting to its own strategy if necessary for task reliability. As discussed later, the heterogeneous role assignment contributes to the consensus inference of large-scale swarms. 
\begin{figure}[!t]
\centering
\includegraphics[width=1.0\columnwidth]{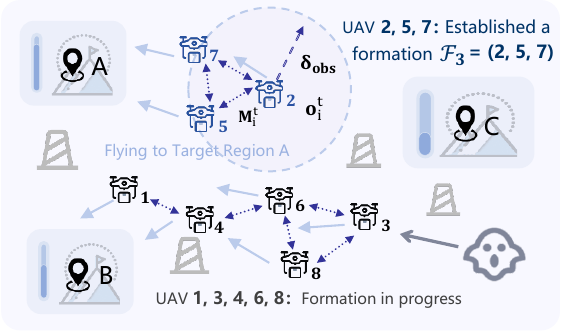}
\caption{A typical DS-CEFC scenario: Agents 2, 5, and 7 communicate directly within their neighborhood and then form a formation to cover target region A, while agents 1, 3, 4, 6, and 8 use indirect, multi-hop communication via reachable neighbors to form a formation and move toward target region B.}
\label{DES-EFC}
\end{figure}
\begin{figure}[!t]
\centering
\includegraphics[width=1.0\columnwidth]{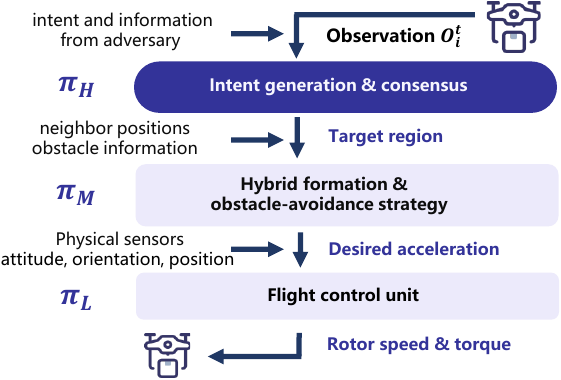}
\caption{The multi-layer policy architecture in distributed cooperative systems for UAVs.}
\label{ContorlSystem}
\end{figure}

As shown in Fig. \ref{ContorlSystem}, we further adopt a hierarchical policy structure. The mid-layer policy $\bm{\pi}_M$ guides the obstacle avoidance and formation $\mathcal{F}_c$, consisting of different numbers $c \in [C_{\min}, C_{\max}]$ via MARL-based navigation, while the low-layer standard PID \cite{meier2015px4} $\bm{\pi}_L$ steers the flight control of UAV dynamics. In this paper, we assume the availability of $\bm{\pi}_M$ and $\bm{\pi}_L$ \cite{CIHRL2025}, and focus on the learning of the high-level policy $\bm{\pi}_H$ only, which is responsible for the inferred consensus $\mathcal{G}^t:= \{g_i^t, \forall i \in \mathcal{N}\}$.  

\begin{figure*}[!t]
    \centering
    \includegraphics[width=1.0\textwidth]{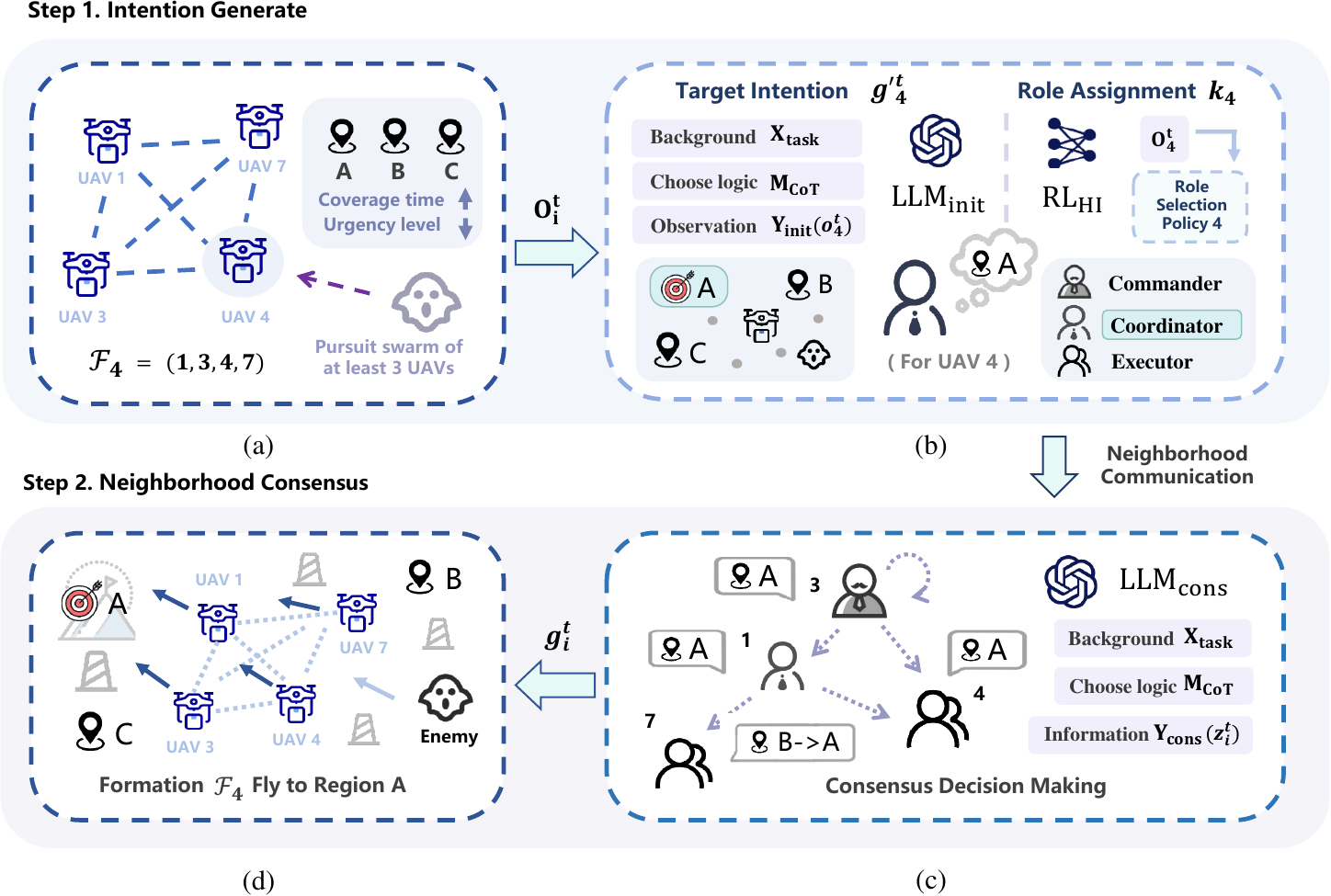}
    \caption{Flowchart of the RALLY algorithm. 
    (a) Taking formation $\mathcal{F}_{4}$ as an example, UAVs 1, 3, 4, and 7 adaptively form the team. 
    (b) Based on local observations $\bm{o}_i^t$, each agent selects target intention ${g’}_i^t$ and role $k_i$.
    Based on background prompt descriptions $\text{X}_\text{task}$, reasoning logic $\text{M}_\text{CoT}$, and local observation prompts $\text{Y}_\text{init}(\bm{o}_i^t)$, UAVs generate target intentions via policy $\text{LLM}_\text{init}$, and then determine roles $k_i$ through a role selection policy $\text{RL}_\text{HI}$.
    (c) After neighborhood communications, $\text{LLM}_\text{cons}$ refines the goal to reach a final, converged consensus decision $g_i^t$ (e.g., Region A), incorporating individual role preferences. 
    (d) The resulting formation $\mathcal{F}_{4}$ then collectively navigates toward Region A.}
    \label{RALLY}
\end{figure*}
\subsection{Problem Formulation}
We define the DS-CEFC task as Decentralized Partially Observable Markov Decision Process (Dec-POMDP) extended to a multi-agent setting with heterogeneous roles and hierarchical semantics. The joint system state at time $t$ is
\begin{equation}
\bm{s}^t := \{(\bm{p}_i^t,\bm{v}_i^t)_{i\in\mathcal{N}}, (\bm{p}_{e}^t,\bm{v}_{e}^t),\left(\bm{p}_\text{tr}^{t},\kappa_\text{tr}^{t})_{\text{tr}\in\mathcal{T}}\right \}. 
\end{equation} 
The state evolves stochastically according to $\bm{s}^{t+1} \sim \mathscr{P}(\bm{s}^{t+1} \mid \bm{s}^t, \bm{a}^t)$,
where the joint action of all $N:=|\mathcal{N}|$ agents is denoted as $\bm{a}^t:=(\bm{a}_1^t,\dots,\bm{a}_N^t)$. By contrast, each agent $i$ only has a local observation $\bm{o}_i^{t}$ given by Eq. \eqref{eq:observation}. 
Due to its long-lasting impact, the reward $R_t$ at time $t$ is formulated as a weighted sum of formation reward $R_{\text{f}}^t$, navigation reward $R_{\text{n}}^{t}$, mission completion $R_{\text{tc}}^{t}$, interference penalty $R_{\text{e}}^{t}$, and collision penalty $R_{\text{c}}^{t}$. In other words, 
\begin{equation}
R^{t}(\bm{s}^t , \bm{a}^t ) = \omega_{\text{f}} R_{\text{f}}^{t} + \omega_{\text{n}} R_{\text{n}}^{t} + \omega_{\text{tc}} R_{\text{tc}}^{t} - \omega_{\text{e}} R_{\text{e}}^{t} - \omega_{\text{c}} R_{\text{c}}^{t},
\end{equation}
with nonnegative scalars $R_{\text{f}}^t,R_{\text{n}}^{t},R_{\text{tc}}^{t},R_{\text{e}}^{t},R_{\text{c}}^{t}\ge0$ and respective weights $\omega_{\text{f}}$, $\omega_{\text{n}}$, $\omega_{\text{tc}}$, $\omega_{\text{e}}$, $\omega_{\text{c}}$. For each reward component, we adopt the consistent definition as in Ref. \cite{CIHRL2025}.
Then the total expected return under a joint policy $\bm\pi_H:=(\pi_1,\dots,\pi_N)$ is given by
\begin{equation}
    J(\bm\pi_H) = \mathbb{E}\left[\sum\nolimits_{t=0}^{\infty} \gamma^t\,R^t\right],
\end{equation}
where $0<\gamma\le1$ is a discount factor. Contingent on the readiness of $\bm{\pi}_M$ and $\bm{\pi}_L$ \cite{CIHRL2025}, the objective is to find decentralized policies $\bm\pi_H$ that maximize $J$ under a partially observable environment.

\section{Role-Adaptive LLM-driven Yoked Navigation}
\label{sec:rally}
In this section, we focus on the hierarchical design of RALLY for yielding the high-layer policy $\bm\pi_H$. Notably, RALLY primarily includes an LLM-based personalized consensus generation framework $\text{LLM}_\text{HC}$ and a credit-based role assignment mechanism $\text{RL}_\text{HI}$. 
\subsection{Two Stage LLM-Based Consensus Inference}
The high-level consensus generation strategy $\text{LLM}_\text{HC}$ employs a two-stage structured prompting approach—local intention generation $\text{LLM}_\text{init}$ followed by neighborhood consensus refinement $\text{LLM}_\text{cons}$—to realize an LLM-driven personalized semantic decision mechanism. 
By integrating role definitions, threat analysis, and formation coverage requirements into natural language prompts, numerical observations are mapped into interpretable intentions. This design deeply couples the LLM's semantic reasoning with various role logics. 
Notably, the LLM reasoning currently does not rely on historical memory and solely on the given context.

\begin{figure*}[t!]
\centering
\includegraphics[width=1.0\textwidth]{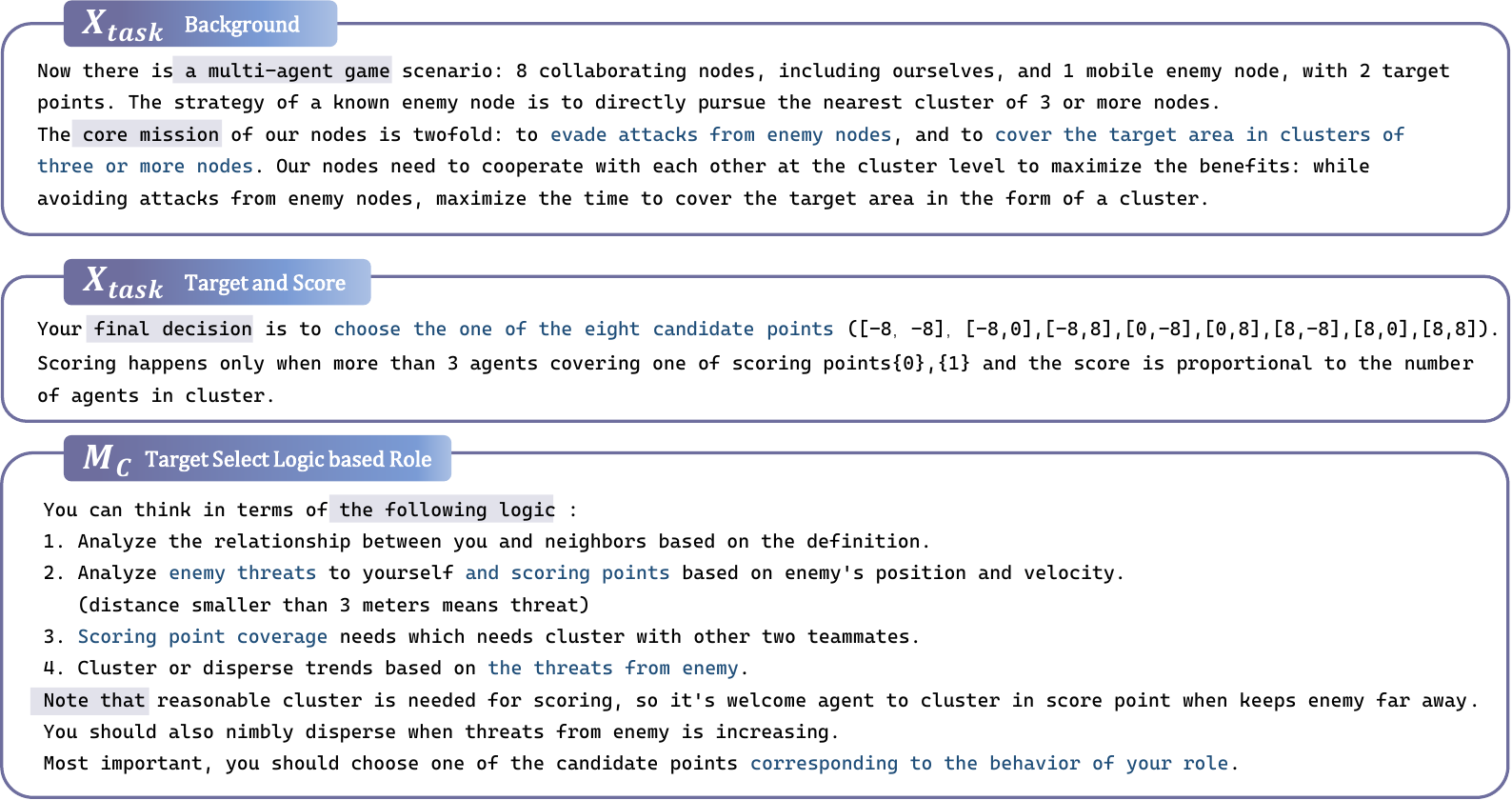}
\caption{Example prompts used as inputs to the LLM.}
\label{Prompt}
\end{figure*}

Firstly, each agent $i$ uses its current local observation $\bm{o}_i^t$ to generate \textit{initial intent} through the LLM. 
Accordingly, we design a task-specific instruction $\text{X}_\text{task}$ that outlines agents' mission requirements and task conditions, and craft the prompt ${\text{Y}_{\text{init}}}$ to capture the observation $\bm{o}_{i}^{t}$ of agent $i$. The LLM's natural-language output is then parsed into numeric goals ${g'}_i^t$ as
\begin{equation}
    \label{eq:intent}
    {g’}_i^t=\text{LLM}_\text{init}(\text{X}_\text{task},{\text{Y}_\text{init}}(\bm{o}_{i}^{t}),\text{M}_\text{CoT}).
\end{equation}
Next, the Role Selection Policy $\text{RL}_\text{HI}$ takes the current observation  $\bm{o}_{i}^{t}$ as input to optimize the role selection for each agent. The final selected role ${k}_i^t$ is given as
\begin{equation}
    \label{eq:rl}
    {k}_i^t=\text{RL}_\text{HI}(\bm{o}_{i}^{t} ; {g'}_i^t),
\end{equation}
which determines the agent's action strategy based on its role.

After communication with neighbors $j$ within a range $\delta_{\text{obs}}$ (i.e., $\|\bm{p}_i^t-\bm{p}_j^t\| < \delta_\text{obs}, \forall j \in \mathcal{N}_{i}^t$), the local information available to agent $i$ at time $t$ can be denoted as:

\begin{equation}
\bm{z}_i^t = \left(\bm{o}_i^t, \left\{(\bm{p}_j^t, \bm{v}_j^t, {g'}_j^t, k_j^t) \right\}_{j \in \mathcal{N}_{i}^t} \right).
\end{equation}

On this basis, each agent $i$ constructs a new prompt ${\text{Y}_{\text{cons}}}$ describing its intents, $\text{RL}_\text{HI}$-produced role $k_i^t$, neighbors' roles, as well as environmental constraints. We further incorporate task-driven Chain-of-Thought (CoT) prompts $\text{M}_\text{CoT}$, such as  ``\textit{Cluster or disperse based on the threats from enemy}" and ``\textit{needs cluster with other two teammates}", to strengthen the LLM's reasoning and minimize hallucinations. 
The LLM is instructed to output a \textit{refined consensus} like ``\textit{I recommend going to target $[x,y]$}" which is converted into the final numeric target:
\begin{equation}
    \label{eq:consensus}
    g_i^t = \text{LLM}_{\text{cons}}(\text{X}_\text{task},\text{M}_\text{CoT},{\text{Y}_\text{cons}}(\bm{z}_i^t )).
\end{equation}
This refinement embeds a balance between maximizing swarm utility and ensuring safe avoidance of enemy under adversarial and coverage requirements.

Example prompts for $\text{X}_\text{task}$ and $\text{M}_\text{CoT}$ are illustrated in Fig. \ref{Prompt}, while more prompt details, such as $\text{Y}_\text{init}$ and $\text{Y}_\text{cons}$, are provided in the Appendix.
Notably, as evidenced in Fig. \ref{llminput} of the Appendix, the design of prompts shows considerable sensitivity to the environmental changes and yields significantly different responses.

For occasional illegal LLM outputs, we implement hierarchical contingency strategies: \texttt{Commander} maintains initial intent; \texttt{Coordinator} defers to valid \texttt{Commander} (else self-reverts); \texttt{Executor} follows any available \texttt{Commander}/\texttt{Coordinator}. This design eliminates performance dips, and stabilizes system bounds by preventing collective deviations.

We compare the two-stage coordination policy  $\bm{\pi}^{\text{two}}_H=(\text{LLM}_{\text{cons}}\circ \text{LLM}_{\text{init}}) \rightarrow g_i^t$ against a one-stage baseline policy $\bm{\pi}^{\text{one}}_H=\widetilde{\text{LLM}}_\text{HC} \rightarrow \tilde{g}_i^t$, where the symbol $\circ$ marks a strategy synergy operation.
Beforehand, we introduce the following definition and assumption. 
\begin{definition}[Joint Policies Definition]

For agent $i$, the action-value function at time $t$ is: 
\begin{equation}
\label{def:value_functions}
\begin{aligned}
Q_i(\bm{o}_i^t, \bm{a}_i^t) & = \mathbb{E}\left[\sum_{t'=t}^{\infty} \gamma^{t'-t}\,R(\bm{s}^{t'},\bm{a}^{t'}) \;\middle|\; \bm{o}_i^t,\bm{a}_i^t \right] \\& = \mathbb{E}\left[\sum_{t'=t}^{\infty} \gamma^{t'-t}\,R(\bm{s}^{t'},\bm{a}^{t'}) \;\middle|\; \bm{o}_i^t,k_i^t,g_i^t \right].
\end{aligned}
\end{equation}
For global state $\bm{s}^t$ and all agents' goal decisions $\bm{g}^t:=(g_1^t,\cdots,g_N^t)$, the global value function $Q_\text{tot}$ is defined by a mixing network $f$, which will be detailed in Section \uppercase\expandafter{\romannumeral3}.\ref{sec_rmix}:
\begin{equation}
     Q_{\text{tot}}(\bm{s}^t,\bm{a}) = f(Q_1, Q_2, \dots, Q_N;\;\bm{s}^t,\bm{a}^t\bigr).
\end{equation}
\end{definition}
\begin{assumption}[Monotonic Value Factorization]
    \label{as:monotonic}
    Under a monotonic value factorization \cite{QMIX}, 
    every weight in $f$ is constrained to be non-negative, which guarantees the monotonicity property:
\begin{equation} 
\label{def:monotonic}
\frac{\partial Q_{\text{tot}}(\bm{s}^t, \bm{g})}{\partial Q_i} \geq 0, \quad \forall i.
\end{equation} 
\end{assumption}

    \begin{assumption}[Performance Improvement of Extra Contextual Reasoning]
        \label{as:two_stage_enhancement}
        Similar to \cite{GPT3,cotpe,llmthink}, the extra contextual reasoning helps guide the decision-making process toward more robust and effective outcomes, which leads to higher-quality $Q$-values.
    \end{assumption}

\begin{theorem}[Two-Stage Superiority]
    \label{thm:superiority}
Under Assumption \ref{as:monotonic} and Assumption \ref{as:two_stage_enhancement}, the two-stage policy achieves strictly higher expected return:
\begin{equation}
    J(\bm{\pi}^{\text{two}}) > J(\bm{\pi}^{\text{one}}),
\end{equation}
provided there exists at least one reachable context where $\text{LLM}_{\text{cons}}$ improves upon $\widetilde{\text{LLM}}_\text{HC}$.
\end{theorem}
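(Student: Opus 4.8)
The plan is to treat the two-stage policy as a monotone refinement of the one-stage policy and close the argument with a policy-improvement bootstrap. Since both $\bm{\pi}^{\text{two}}$ and $\bm{\pi}^{\text{one}}$ share the same first-stage intent generation $\text{LLM}_{\text{init}}$ and the same $\text{RL}_\text{HI}$ role assignment $k_i^t$, they differ only in whether the consensus-refinement map $\text{LLM}_{\text{cons}}$ is applied to produce $g_i^t$ versus the direct output $\tilde{g}_i^t$. First I would rewrite the objective in value form,
\[
J(\bm{\pi}) = \mathbb{E}_{\bm{s}^0}\!\left[Q_{\text{tot}}^{\bm{\pi}}(\bm{s}^0, \bm{g})\right],
\]
so that comparing returns reduces to comparing the induced total value functions at the initial-state distribution.

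Second, I would establish per-agent dominance: for every reachable context,
\[
Q_i(\bm{o}_i^t, k_i^t, g_i^t) \geq Q_i(\bm{o}_i^t, k_i^t, \tilde{g}_i^t), \quad \forall i.
\]
The enabling fact is the hierarchical contingency design described just above the theorem: because $\text{LLM}_{\text{cons}}$ can always revert to the first-stage intent (the \texttt{Commander} keeps its intent, while the \texttt{Coordinator} and \texttt{Executor} fall back when no valid guidance exists), the refined goal weakly dominates the one-stage goal in individual action-value. The hypothesis that there exists a reachable context where $\text{LLM}_{\text{cons}}$ improves upon $\widetilde{\text{LLM}}_\text{HC}$ then upgrades this to a \emph{strict} inequality for at least one agent on a positive-probability context.

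Third, I would lift the per-agent dominance to the joint value through Assumption~\ref{as:monotonic}. Since the mixing network $f$ is non-decreasing in each $Q_i$, pointwise dominance of the components yields
\[
Q_{\text{tot}}(\bm{s}^t, \bm{g}^{\text{two}}) \geq Q_{\text{tot}}(\bm{s}^t, \bm{g}^{\text{one}}),
\]
with strict inequality wherever some $Q_i$ strictly improves, relying on strict positivity of the corresponding weight $\partial Q_{\text{tot}}/\partial Q_i$. This is exactly the one-step improvement condition, so I would then apply the policy-improvement theorem to propagate the local gain into a global one, concluding $V^{\text{two}}(\bm{s}) \geq V^{\text{one}}(\bm{s})$ at every state and hence $J(\bm{\pi}^{\text{two}}) > J(\bm{\pi}^{\text{one}})$ after taking the initial-state expectation.

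The hard part will be the distribution shift inherent to changing the policy: the refined policy induces a different state-visitation measure, so a pointwise component-wise comparison does not integrate directly into a return comparison. The policy-improvement bootstrap circumvents this, but care is needed to ensure the strict gain survives the expectation — that is, the improving context must retain positive visitation probability, and the monotone mixing must preserve rather than annihilate strictness. For this reason I would strengthen Assumption~\ref{as:monotonic} to strict positivity of the active weight along the improving trajectory, which is the cleanest way to guarantee that the local strict improvement is not washed out when propagated through $f$ and averaged over the trajectory distribution.
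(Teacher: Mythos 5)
Your overall skeleton (per-agent dominance, monotone lifting through the mixing network, then a return comparison) matches the paper's, but your justification of the key per-agent inequality has a genuine gap. The paper does not derive $Q_i(\bm{o}_i^t, k_i^t, g_i^t) \geq Q_i(\bm{o}_i^t, k_i^t, \tilde{g}_i^t)$ from the hierarchical contingency design; it simply stipulates, ``by construction of $\text{LLM}_{\text{cons}}$,'' that the consensus step selects $g_i^* = \underset{g \in\mathcal{T}}{\operatorname{argmax}}\ Q_i(\bm{o}_i^t, k_i^t, g)$, after which the inequality is immediate by optimality. Your fallback argument cannot substitute for this: the contingency mechanism triggers only on \emph{illegal} LLM outputs (invalid or unparsable targets), not on legal-but-value-suboptimal ones. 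An $\text{LLM}_{\text{cons}}$ that always emits a legal yet poor goal never reverts to the initial intent, so ``the refinement can always fall back'' does not imply that the refined goal weakly dominates in $Q_i$. Moreover, your dominance claim effectively compares the refined goal against the first-stage intent $\hat{g}_i^t$, whereas the theorem compares against the one-stage baseline $\tilde{g}_i = \widetilde{\text{LLM}}_\text{HC}(\bm{o}_i^t,k_i^t)$; these coincide only under the additional identification $\widetilde{\text{LLM}}_\text{HC} = \text{LLM}_{\text{init}}$, which neither you nor the paper establishes. Without the argmax-by-construction premise (or an equivalent value-awareness assumption on $\text{LLM}_{\text{cons}}$), your second step fails and nothing downstream can rescue it.

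That said, the concerns you raise in your final paragraph are well taken and in fact expose weaknesses in the paper's own proof: (i) Assumption \ref{as:monotonic} gives only $\partial Q_{\text{tot}}/\partial Q_i \geq 0$, so a strict per-agent improvement yields only a weak total-value inequality unless the active weight is strictly positive --- the paper asserts strictness anyway; and (ii) the paper stops at a pointwise comparison of $Q_{\text{tot}}$ and never addresses the change of state-visitation measure needed to conclude $J(\bm{\pi}^{\text{two}}) > J(\bm{\pi}^{\text{one}})$, which your policy-improvement bootstrap (together with positive visitation probability of the improving context) is the right tool to close. If you replace your fallback justification with the paper's argmax-by-construction premise and retain your strict-positivity and policy-improvement refinements, you would obtain an argument strictly more rigorous than the published one.
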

\begin{proof}[Proof]
Under Assumption \ref{as:two_stage_enhancement}, for any $(\bm{o}_i^t, k_i^t)$, the two-stage consensus step selects 
\begin{equation}
Q_i(\bm{o}_i^t, k_i^t, g_i^t) \geq Q_i(\bm{o}_i^t, k_i^t, {g'}_i^t), \forall {g'}_i^t
\end{equation}
by construction of $\text{LLM}_{\text{cons}}$. 
In contrast, the one-stage policy outputs $\tilde g_i = \widetilde{\text{LLM}}_\text{HC}(\bm{o}_i^t,k_i^t)$ without refinement. In other words, Assumption \ref{as:two_stage_enhancement} implies
\begin{equation} 
Q_i(\bm{o}_i^t, k_i^t, g_i^t) \geq Q_i(\bm{o}_i^t, k_i^t, \tilde{g}_i^t). \end{equation}
Under Assumption \ref{as:monotonic}, 
increasing any individual agent' value $Q_i$ cannot decrease the total value,
which implies: 
\begin{equation} Q_{\text{tot}}(\bm{s}^t, \bm{g}^*) \geq Q_{\text{tot}}(\bm{s}^t, \tilde{\bm{g}}), 
\end{equation} 
with strict inequality if any agent achieves improvement.
\end{proof}
Theorem \ref{thm:superiority} shows that under Assumptions \ref{as:monotonic} and \ref{as:two_stage_enhancement}, the two-stage policy $\bm{\pi}^{\text{two}}_H=(\text{LLM}_{\text{cons}}\circ \text{LLM}_{\text{init}})$ achieves strictly higher expected cumulative reward than the one-stage baseline $\bm{\pi}^{\text{one}}_H=\widetilde{\text{LLM}}_\text{HC}$. In other words, while it may not guarantee the optimal goal decision, the two-stage consensus process provides a more comprehensive evaluation than the initial decision, potentially leading to an improved target decision. Next, we introduce role-based credit-assignment mechanism to ensure that Assumption \ref{as:monotonic} always holds.

\subsection{Credit-Assignment Mechanism Based on Role-value Mixing Network}
\label{sec_rmix}
\begin{figure}[!t]
\centering
\includegraphics[width=1.0\columnwidth]{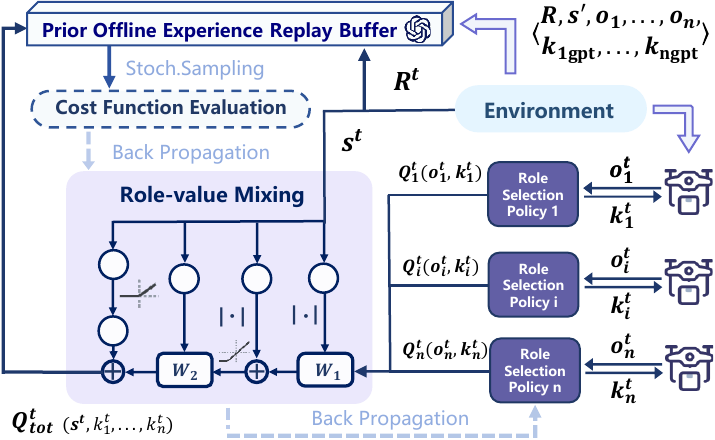}
\caption{The training of the RMIX-based role selection mechanism. Each UAV employs a fully distributed role selection policy, using local observation $o_i^t$ to selects role $k_i^t$, and computes corresponding $Q_i(\bm{o}_i^{t},k_i^t)$. These $Q_i$ are aggregated into the global action-value $\bm{Q}_\text{tot}$ via the RMIX network and stored in a prior offline experience replay buffer. Function evaluation cost is computed using the buffer data, followed by gradient backpropagation to update both the RMIX network and role selection policies.}
\label{RMIX}
\end{figure}
\begin{algorithm}[!ht]
\caption{Credit-Assignment Enhanced Intention Generation Strategy (RMIX)}
\label{alg:RMIX}
\begin{algorithmic}[1]
\REQUIRE Role candidate set $\mathcal{K} = \{\texttt{Commander}, \texttt{Coordinator}, \texttt{Executor}\}$, \\
\hspace{1.8em} RMIX parameters $\theta$, target network $\bar{\theta}$, \\
\hspace{1.8em} Discount factor $\gamma_{\text{rmix}}$, learning rate $\alpha$, batch size $b$, \\
\hspace{1.8em} Experience replay buffer $\mathcal{B}_R$.

\vspace{0.5em}
\STATE \textbf{Step 1: Offline Experience Collection via LLM}
\FOR{episode = 1 to $N_{\text{pre}}$}
    \STATE Initialize environment, observe $\{\bm{o}_i^t\}_{i=1}^n$;
    \FOR{each agent $i$}
        \STATE $k_{i_\text{GPT}} \leftarrow \bm{\pi}_{\text{GPT}}(\bm{o}_i^t)$; \hfill$\triangleright$ GPT assigns role
    \ENDFOR
    \STATE Execute actions, receive reward $R$, state $\bm{s}$;
    \STATE Store $\langle \bm{s}, \{\bm{o}_i, k_{i_\text{GPT}}\}, R \rangle$ into buffer $\mathcal{B}_R$;
\ENDFOR

\vspace{0.5em}
\STATE \textbf{Step 2: Online Cooperative RMIX Training}
\FOR{episode = 1 to $N_{\text{epoch}}$}
    \STATE Reset environment, observe $\{\bm{o}_i^0\}_{i=1}^N$;
    \FOR{$t = 0$ to $T - 1$}
        \FOR{each agent $i$}
            \STATE $Q_i(\bm{o}_i^t, k) \leftarrow \text{MLP}_\theta(\bm{o}_i^t)$;
            \STATE $k_i^t \leftarrow \arg\max_{k \in \mathcal{K}} Q_i(\bm{o}_i^t, k)$;
        \ENDFOR
        \STATE Execute roles $\{k_i^t\}$, receive $R^t$, $\bm{s}^{t+1}$, $\{\bm{o}_i^{t+1}\}$;
        \STATE Store $\langle \bm{s}^t, \{\bm{o}_i^t, k_i^t\}, R^t, \bm{s}^{t+1} \rangle$ into $\mathcal{B}_R$;
        
        \IF{$|\mathcal{B}_R| \geq b$}
            \STATE Sample batch $\{(\bm{s}^j, \{\bm{o}_i^j, k_i^j\}, R^j, \bm{s}^{j+1})\}_{j=1}^b$;
            \STATE Compute target values as follows:\\
            \(\vtop{\hbox{%
              $\displaystyle
                \begin{aligned}
                  y^j
                    &= R^j 
                     + \gamma_{\text{rmix}}\;\max_{\bm{k}'}\bar{Q}\bigl(\bm{s}^{j+1}, \bm{k}'; \bar{\theta}\bigr);
                \end{aligned}$
            }}\)

            \STATE Compute RMIX outputs $Q_{\text{tot}}$ according to Eq. \eqref{eq:rmix};
            
            \STATE Update: $\theta \leftarrow \theta - \alpha \nabla_\theta \sum_j (y^j - Q_{\text{tot}}^j)^2$;
            \STATE Periodically update target: $\bar{\theta} \leftarrow \tau \theta + (1 - \tau)\bar{\theta}$.
        \ENDIF
    \ENDFOR
\ENDFOR
\RETURN Trained RMIX network $\text{RL}_\text{HI}$ for decentralized role selection
\end{algorithmic}
\end{algorithm}
The underlying distributed nature of the DS-CEFC task makes a direct fine-tuning of LLM incompetent for reasonable multi-agent collaboration. 
As illustrated in Fig. \ref{RMIX}, each distributed agent $i$ feeds its current local observation $\bm{o}_i^t$, and computes the obtainable optimal role satisfying:
\begin{equation}
k_i^t=\underset{k_{i}^{t} \in\mathcal{K}}{\operatorname{argmax}}\ Q_{i}\left(\bm{o}_{i}^{t}, k_{i}^{t} ; {g'}_i^t\right).
\end{equation}
By Eq. \eqref{def:value_functions}, in dynamic multi-agent adversarial tasks, instantaneous reward $R$ cannot directly reflect the contribution from the choice of role $k_{i}^{t}$. To address this issue, we propose an RMIX-assisted credit-assignment mechanism to evaluate each agent's role choices. 
Also, we consider the significant inference latency of LLM, which results in higher costs for acquiring training samples.
In particular, with $8$ agents and $3$ roles each, the dimension of joint role space turns $3^8$, and when combined with sparse, adversarial rewards, this makes pure online exploration extremely difficult.
Therefore, we resort to an offline learning approach to improve sample efficiency.
Similar to Curriculum Learning (CL)\cite{Wang2020ACS}, we first exploit the zero-shot capability of LLM (e.g., GPT-4o\cite{Hurst2024GPT4oSC}) via API to obtain its role assignment strategy $\bm{\pi}_{\text{GPT}}$. 
To clarify, $\bm{\pi}_{\text{GPT}}$ refers to the online GPT4o-API version, rather than any locally fine-tuned large model.
By interacting with the environment under local observation $\bm{o}_i^t$, the mid-layer policy $\bm{\pi}_{M}$ and the physical flight controller $\bm{\pi}_{L}$, we record GPT's preliminary role suggestions ${k}_{i_\text{GPT}}$ and store the related transition quadruple $\langle {\bm o}_{i},{k}_{i_\text{GPT}},\bm{s}',R\rangle$ into a replay buffer $\mathcal{B}_R$, where $\bm{s}'$ denotes the transitioned state. This offline data collection runs in parallel with LLM consensus reasoning and seeds the replay buffer with sensible role assignments, enabling RMIX to assist the LLM in understanding role assignments and reducing the cold-start overhead from ineffective exploration.

Inspired by \cite{QMIX}, RMIX aggregates individual role-value estimates $Q_i(\bm{o}_i^{t},k_i^t)$ into a global value $Q_\text{tot}$ via a two-layer Multi-Layer Perceptron (MLP). 
Let $\bm{Q}^t=[Q_1(\bm{o}_1^{t},k_1^t),\ldots,Q_N(\bm{o}_N^{t},k_{N}^t)]$.
A small hypernetwork conditioned on the global state $\bm{s}^t$ produces nonnegative weights, by enforcing:
\begin{equation}
Q_\text{tot}^{t}=\text{ReLU}\left(\bm{W}_{2}^\top\big(\text{ReLU}(\bm{W}_{1} \bm{Q}^t+\bm{b}_{1})\big)+{b}_{2}\right),
\label{eq:rmix}
\end{equation}
where the two nonnegative weight vectors $\bm{W}_{1} \in \mathbb{R}^{E \times N}$ and $\bm{W}_{2}\in \mathbb{R}^{E \times 1}$, while  biases $\bm{b}_{1} \in \mathbb{R}^{{E}\times{1}}$ and $b_{2} \in \mathbb{R}$, with $E$ denoting the hidden layer dimension of RMIX. $\text{ReLU}(\cdot)$ indicates the ReLU nonlinear activation function. Eq. \eqref{eq:rmix} ensures the monotonic mapping between $Q_\text{tot}^t$ and $\bm{Q}^t$, thus satisfying the Assumption \ref{as:monotonic}.

Subsequently, we proceed to the standard CTDE online training. RMIX jointly learns from both offline and newly collected online samples to continuously explore and refine cooperative behaviors among agents. 
Generally, RMIX can be solved by a classical Stochastic Gradient Descent (SGD) approach, with the cost function of RMIX being formulated as:
\begin{equation}
L(\theta)=\sum_{i=1}^{|\mathcal{B}_R|}\left[\left( y_{i}^\text{tot}-Q_\text{tot}(\bm{s}^t,k^t; \theta,{g}^t)\right)^{2}\right], 
\end{equation}
where $\theta$ denotes the concatenation of $\bm{W}_{1}$, $\bm{W}_{2}$, $\bm{b}_{1}$ and $b_{2}$ while $|\mathcal{B}_R|$ denotes the number of samples sampled from the empirical memory $\mathcal{B}_R$. As an approximate estimation of the cumulative returns under the global state $\bm{s}^t$, $y^\text{tot}$ is given as:
\begin{equation}
y^\text{tot}=R^t+\gamma_\text{rmix} \max _{k^{\prime}} \bar{Q}\left(\bm{s}^{\prime},k^{\prime} ; \bar{\theta},{g}^t\right),
\end{equation}
where for the transitioned state $\bm{s}^{\prime}$ and the taken action $k^{\prime}$, $\bar{Q}\left(\bm{s}^{\prime},k^{\prime}; \bar{\theta}\right)$ denotes the target network parameterized by $\bar{\theta}$ and $\gamma_\text{rmix}$ is the discount factor.

Finally, Algorithm \ref{alg:RMIX} summarizes the procedure of RMIX.

\subsection{Context-Based LLM Fine-Tuning}
\begin{algorithm}[!t]
\caption{Context Transfer Enhanced Lightweight Large Model Consensus Inference Algorithm}
\label{alg:Lightweight}
\begin{algorithmic}[1]
\REQUIRE Task prompt $\text{X}_\text{task}$, prompt generator $\text{Y}(\bm{o}_i)$, \\
\hspace{0.2em} GPT‑4o API, target set $\mathcal{G}_a$, rewards $R$, \\
\hspace{0.2em} thresholds $\{\tau_r,L_{\min},L_{\max}\}$, weights $\{w_{1,g},\cdots, w_{4,g}\}$, \\
\hspace{0.2em} minimum samples $M$, batch size $B$, LoRA params $\psi$.

\STATE $\mathcal{B}_\text{LLM}\gets\emptyset$
\WHILE{$|\mathcal{B}_\text{LLM}|<M$}
  \STATE Observe $(\bm{o}_i)$;
  \STATE $(g_i^*,\mathscr{I}_i^*)\gets\text{GPT4o}(\text{X}_\text{task},\,\text{Y}(\bm{o}_i))$\;cf.\ Eq.~\eqref{human-mark};
  \STATE Append $(\bm{o}_i,k_i,g_i^*,\mathcal{I}_i^*,R)$ to $\mathcal{B}_\text{LLM}$;
\ENDWHILE

\STATE $\mathcal{B}_\text{fil}\!\gets\!\{x\in\mathcal{B}_\text{LLM}\mid \text{Check}(x)\ge1\}$\;cf.\ Eq.~\eqref{QualityEvaluation};

\FOR{epoch=1 \TO $N_{\mathrm{epoch}}$}
  \STATE Sample batch $\mathcal{B}\subset\mathcal{B}_\text{fil}$ of size $B$;
  \STATE Compute loss via Eq.~\eqref{LoraLoss};
  \STATE Update $\psi\gets\psi - \eta\nabla_\psi$;
\ENDFOR

\RETURN Fine‐tuned LLM$_\psi$
\end{algorithmic}
\end{algorithm}

Albeit the proficiency of LLMs like GPT-4o\cite{Hurst2024GPT4oSC}, a lightweight version is preferred by resource-constrained multi-agent system. Therefore, through self-generated instruction tuning \cite{Wang2022SelfInstructAL}, we introduce a capacity-migration augmentation, so as to improve a smaller model's reasoning ability.
Through a concerted effort, we successfully transfer task-understanding capabilities from a State-Of-The-Art (SOTA) foundation model to a local version, and subsequently compress the model to under 5GB of memory usage, thus potentially enabling distributed inference of the consensus reasoning module on onboard UAV GPUs\cite{pmlr-v235-park24e}.

Due to the limited dataset in DS-CEFC, we call an online GPT-4o model to generate samples as: 
\begin{equation}
\label{human-mark}
{X}_\text{task},\text{Y}(\bm{o}_{i}^t) \rightarrow g_{i}^{*},\mathscr{I}_i^{*},
\end{equation}
where $g_{i}^{*}$ denotes the manually labeled decision, and $\mathscr{I}_i^{*}$ constitutes additional consensus inference outputs, stored beyond Eq. (7), specifically for training purposes.
In particular, to guide the generation of desired output in Eq. \eqref{eq:consensus}, human-annotated instructions, similar to those in Fig. \ref{Prompt} and \ref{llmoutput}, are provided. On this basis, GPT-4o's output ${g}_{i}$ serves as RALLY's consensus decision for selecting the agent's target region. We then interact with the mid-layer policy $\bm{\pi}_{M}$ and the physical flight controller $\bm{\pi}_{L}$ to obtain the next local observation $\bm{o}_{i}^{t+1}$ and role $k_{i}^{t+1}$. This process repeats until acquiring a dataset $\mathcal{B}_\text{LLM}$ with sufficient inference samples and trajectory data $(\bm{o}_i,k_i,g_i^*,\mathcal{I}_i^*,R)$.

The raw datasets $\mathcal{B}_\text{LLM}$ may contain low-quality or invalid outputs. Therefore, we apply a filtering mechanism to retain a high-quality subset $\mathcal{B}_{\text{fil}}$. To weed out low-quality samples, we check whether the yielded target region $g_{i}^{*}$ belongs to the target set $ \mathcal{T}$, the length of the inference content $\mathscr{I}_i^{*}$ is illegal with anomalous symbols, and it conforms to the task constraints based on the $r_i$. Mathematically,
\begin{equation}
\label{QualityEvaluation}
\begin{aligned}
&\text{Check}\left(g_i^{*}, \mathscr{I}_{i}^{*}, R\right) \\
& = 
 \mathbb{I}\left(g_i^{*} \in \mathcal{G}_a\right) \cdot w_{1,g} + \mathbb{I}\left(\mathscr{I}_{i}^{*} \cap \Lambda = \emptyset\right) \cdot w_{2,g} \\
& + \mathbb{I}\left(L_{\min} \leq\left|\mathscr{I}_{i}^{*}\right| \leq L_{\max}\right) \cdot w_{3,g}  + \mathbb{I}\left(R \geq \tau_r\right) \cdot w_{4,g},
\end{aligned}
\end{equation}
where $\mathbb{I}\left(\cdot\right)$ is an indicator function that returns $1$ when the condition is established and $0$ otherwise; $w_{1,g}$, $w_{2,g}$, $w_{3,g}$, $w_{4,g}$ are the indicator importance weights; $\Lambda$ is a pre-defined set of anomalous characters; $L_{\min}$ and $L_{\max}$ denote the minimum and maximum inference result lengths, respectively.

Notably, the fine-tuning aims to minimize the mean squared error (MSE) between model outputs and GPT-4o reference samples, that is:
\begin{equation}\label{LoraLoss}
\begin{split}
\mathcal{L}_{\text{MSE}} = \frac{1}{\lvert\mathcal{B}_{\text{fil}}\rvert}
   \sum_{(g_{i}^{*},\mathscr{I}_i^{*}) \in \mathcal{B}_{\text{fil}}}
   \Bigl(
     \text{LLM}_\psi\bigl(\text{X}_{\text{task}},\,
       \text{Y}(o_i)\bigr)-
     (g_{i}^{*},\mathscr{I}_i^{*})
   \Bigr)^2,
\end{split}
\end{equation}
where $\psi$ denotes the low-rank adaptation parameters. To reduce computation and storage requirements, we use the LLaMA-Factory framework\cite{Zheng2024LlamaFactoryUE} with LoRA \cite{Hu2021LoRALA} to fine-tune a smaller, $\psi$-parameterized LLM (e.g., Qwen2.5 or Llama3.1).

Finally, the pseudocode is presented in Algorithm \ref{alg:Lightweight}. 
\section{Experimental Results and Discussions}
\label{sec:results}
\subsection{Experimental Settings}
\begin{table}[t]
    \centering
    \caption{Key Environmental and Reward Function Parameter Configurations}
    \label{tab:para_env}
    \begin{tabular}{l|c}
    \hline
    \textbf{Parameter} & \textbf{Setting} \\ \hline
    No. of UAVs $N$                  & $\{8,9,10,11\}$ \\
    Formation patterns set $\mathcal{C}$        & $\{3,4,5,6,7,8\}$ \\
    UAV velocity range (m/s)        & [$-1$, $1$]\\
    Adversary velocity range (m/s)  &[$-0.75$ , $0.75$]\\
    Observation distance $\delta_\text{obs}$  (m)        & $3$ \\
    Weights $\omega_{\text{f}}$, $\omega_{\text{n}}$, $\omega_{\text{tc}}$, $\omega_{\text{e}}$, $\omega_{\text{c}}$ of reward $R$     &$(15,4,10,100,100)$\\
    Decrease factor $\omega_d$                   & $0.003$\\
    Joint policy discount factor $\gamma $   & $0.92$    \\
    RMIX discount factor $\gamma_\text{rmix}$           & $0.95$  \\
    RMIX learning rate $\alpha$             & $1\times 10^{-5}$  \\
    RMIX hidden layer dimension $E$ & $128$\\
    Thresholds $\{\tau_r,L_{\min},L_{\max}\text{(token)\}}$    &$(-3,000,200,400)$ \\
    LoRa weights $w_{1,g}\cdots w_{4,g}$                   & $(0.45,0.25,0.2,0.1)$ \\
    Minimum No. of samples $M$                &  $12,000$\\
    \bottomrule
    \end{tabular}
\end{table}
\begin{figure}[!t]
\centering
\includegraphics[width=0.85\linewidth]{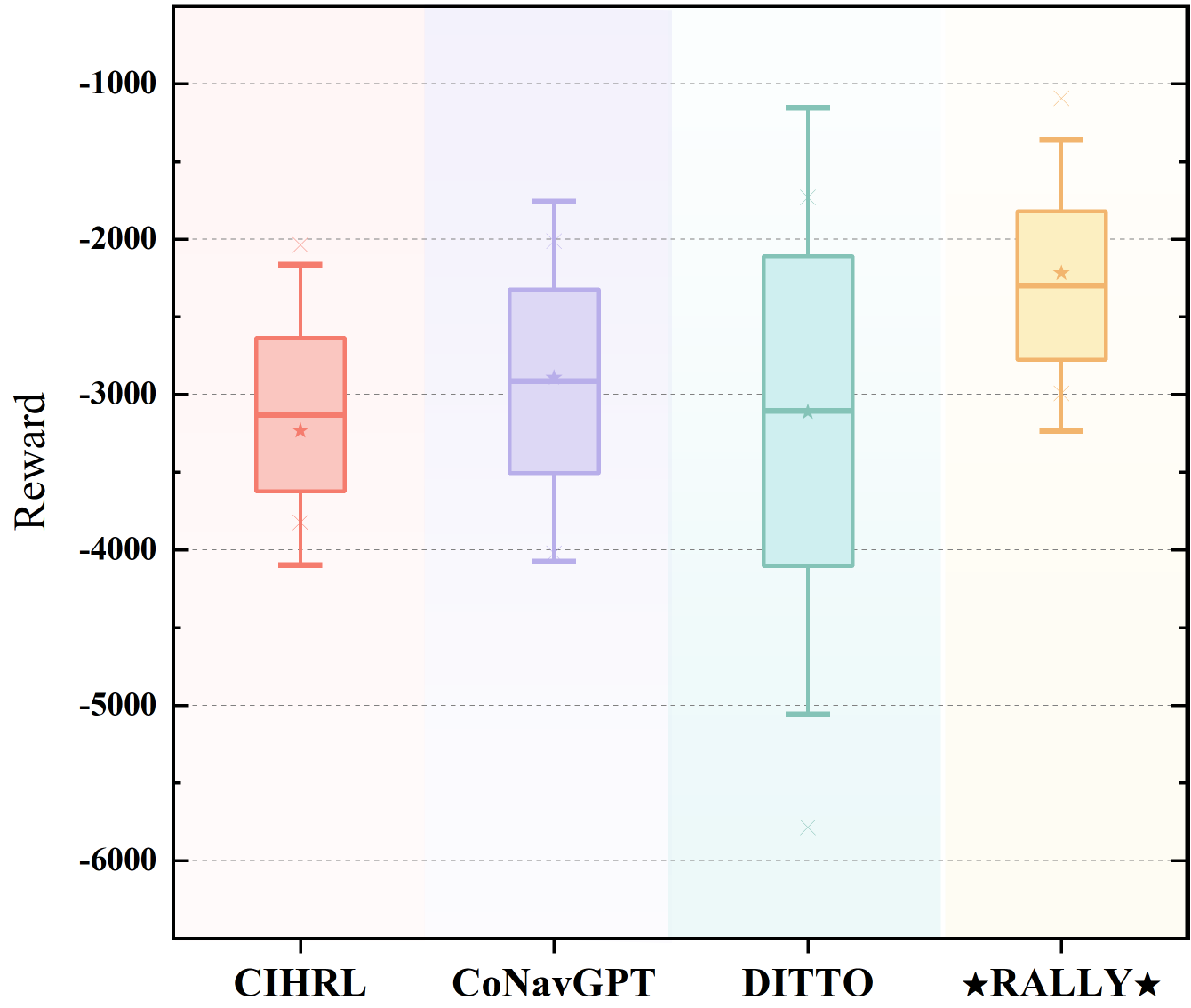} 
\caption{Performance comparison between RALLY and baselines.} 
\label{overall} 
\end{figure}
In this section, we evaluate the effectiveness and superiority of RALLY on DS-CEFC tasks in MPE\cite{lowe2017multi} and a fully distributed, high-fidelity SITL\cite{DangNguyen2019VisionBasedSF} simulation built on Gazebo–ROS–PX4. Concretely, each simulation runs for $1,000$ steps, wherein every $50$ steps, UAVs select their target regions according to $\pi_H$. The set of possible targets is $\mathcal{G}_a=\{(p_{x},p_{y})|p_{x},p_{y}\in \{-8, 0, 8\} \}$ with both coordinates initialized uniformly in $[-8, 8]$ m and re-sampled when a region's urgency reaches zero. The urgency level $\kappa_\text{tr}^{t}$ is initialized as $1$. Whenever there exists a formation of UAVs covering the region $\text{tr}\in \mathcal{T}$,  $\kappa_\text{tr}^{t}$ will decrease linearly with a scale $\omega_d$ until zero;
otherwise, it remains unchanged. 
The pursuer's speed is set to be twice that of the evader to ensure it has the necessary mobility to chase down and capture the evaders effectively.
Other key environment and mission parameters are listed in Table \ref{tab:para_env}. 

\begin{figure}[!t]
\begin{minipage}[t]{0.505\linewidth}
    \includegraphics[width=\linewidth]{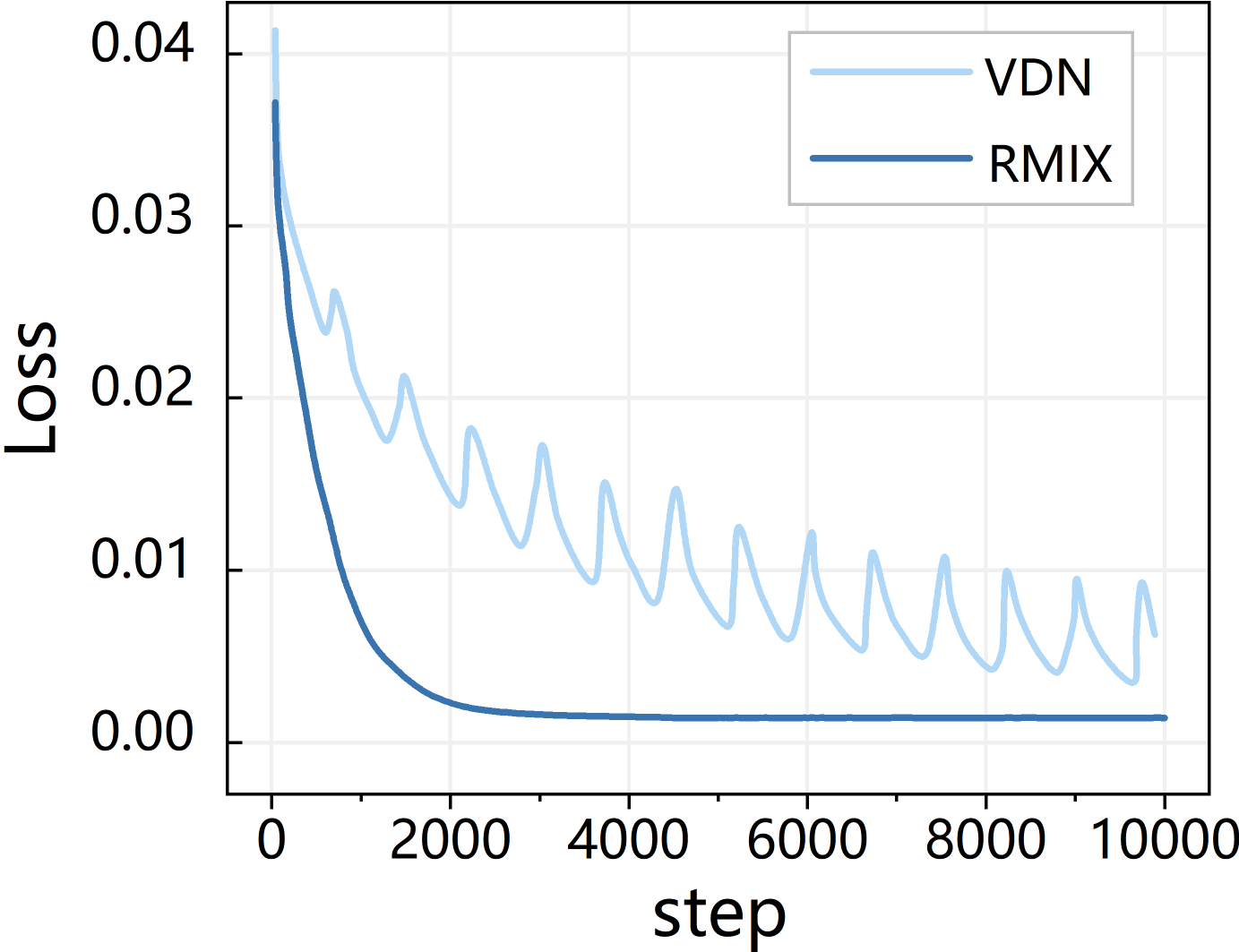}
    \caption{Training based on RMIX.} 
    \label{RmixLoss}
\end{minipage}%
    \hfill%
\begin{minipage}[t]{0.48\linewidth}
    \includegraphics[width=\linewidth]{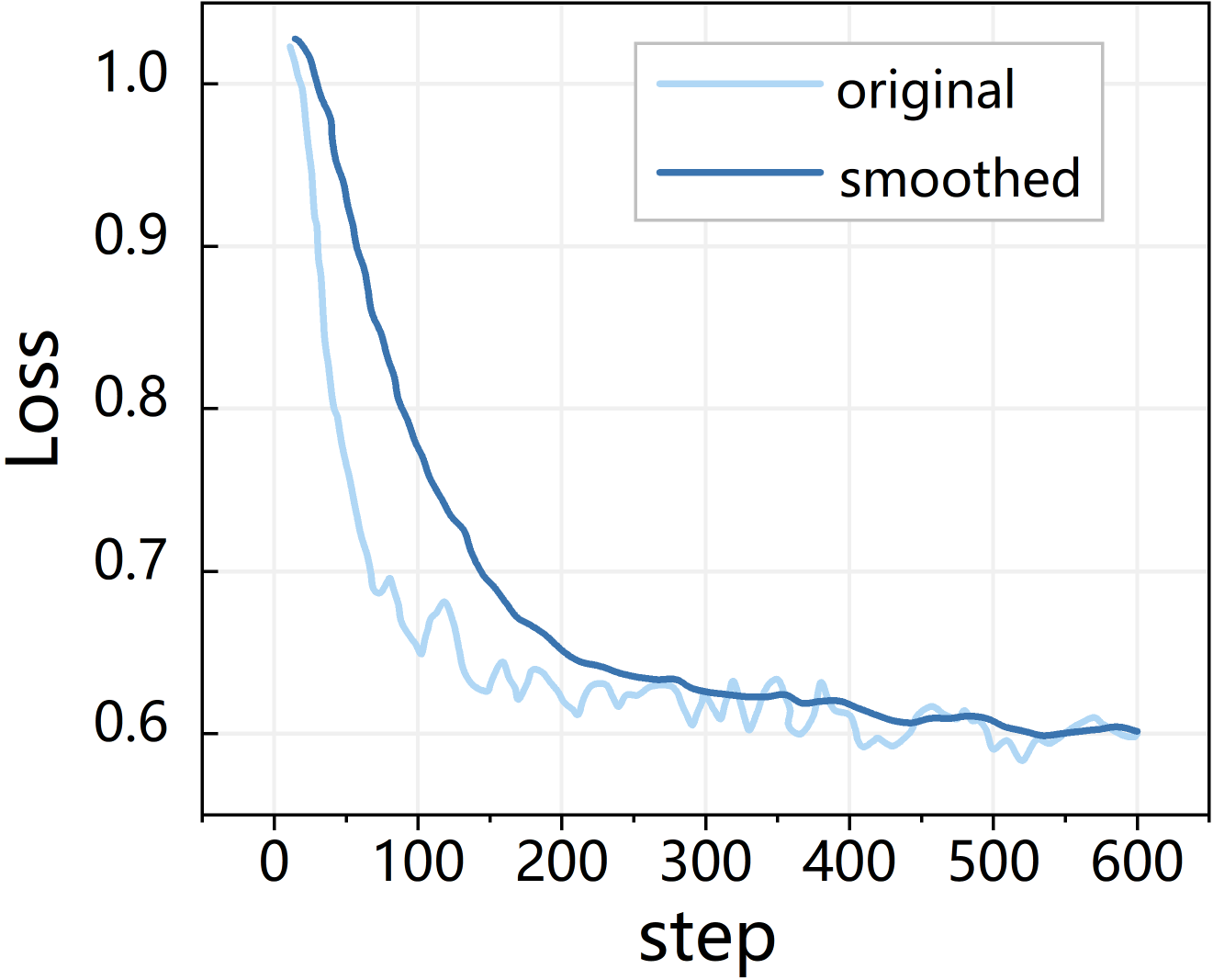}
    \caption{Fine-tuning of Qwen2.5-7B.}
    \label{LoraLosspic}
\end{minipage} 
\end{figure}
For RALLY, the RMIX network is implemented as a two-layer MLP with a hidden layer size of $16$. During training, we use a batch size of $256$, and update the target network with soft updates at a rate of $0.01$.
For the capacity-migration-augmentation strategy, we create $L=50$ manually labeled examples to serve as the few-shot corpus. Given the inference latency, we set the few-shot sample count to $\rho=1$. During the API-driven data collection, the GPT-4o model is leveraged to provide high-quality inferences. After simulation and filtering, we accumulate $|\mathcal{B}_\text{fil}|=8,231$ samples for fine-tuning the local model Qwen2.5-7B. 
The adversary uses a PPO policy\cite{PPO} with the same network structure as $\pi_M$, learning to chase the nearest cluster of at least three agents while avoiding obstacles.

\begin{figure}[t] 
\centering
\includegraphics[width=0.8\linewidth]{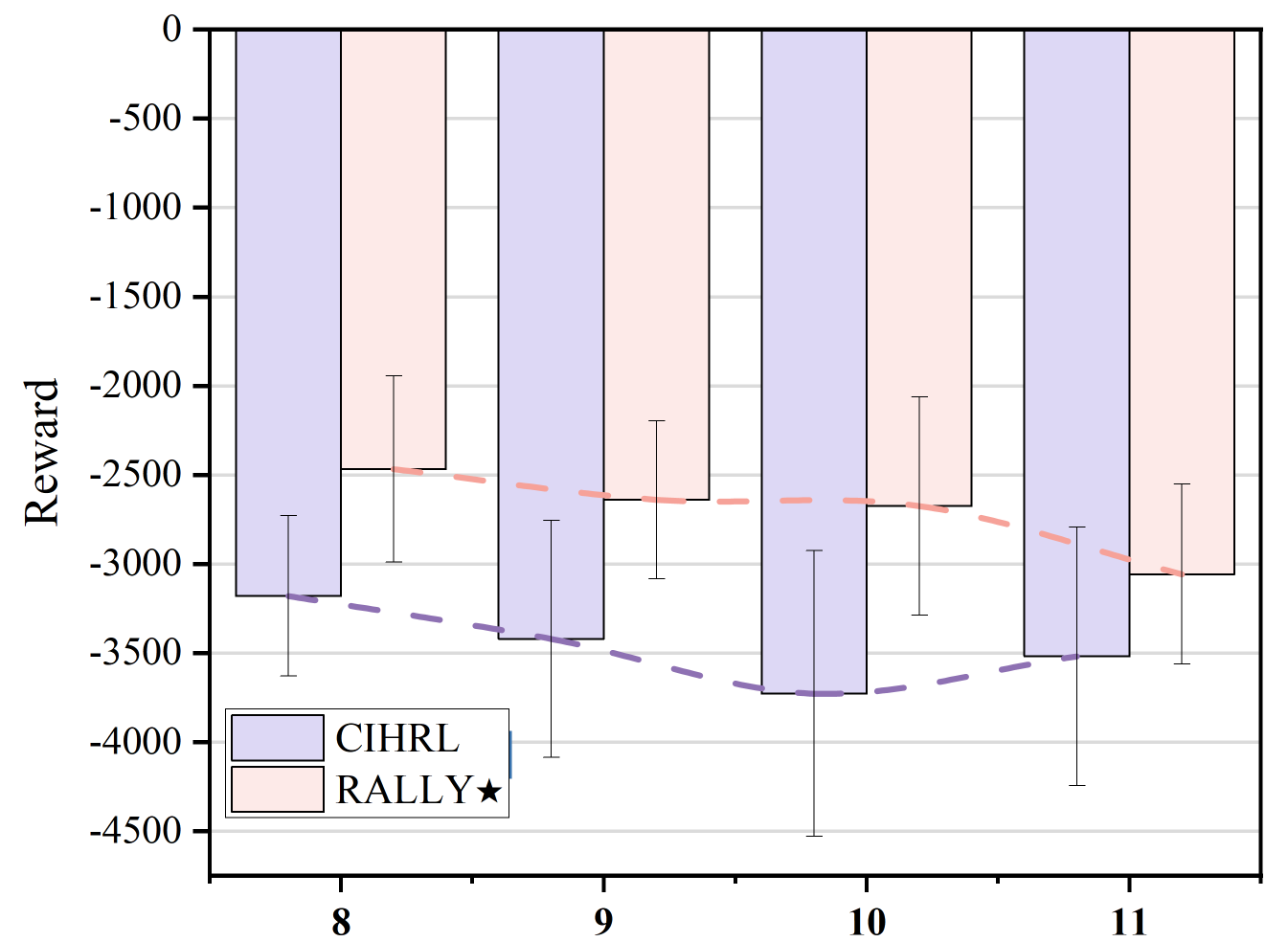}
\caption{Generalization of RALLY to varying numbers of agents.}
\label{AgentNum}
\end{figure}

We compare RALLY against three representative baselines, CIHRL\cite{CIHRL2025}, CoNavGPT\cite{yu2023co}, and DITTO\cite{ditto2024large}, to evaluate its effectiveness on the DS-CEFC task.  
CIHRL \cite{CIHRL2025}, which does not incorporate role assignment, incorporates multi-agent communication and belongs to the SOTA MARL approach for DS-CEFC.  
CoNavGPT \cite{yu2023co} employs an LLM as a global planner without any training process, achieving high success rates and efficiency on the navigation task. 
DITTO \cite{ditto2024large} achieves good collaboration based on LLM, demonstrating strong role-based heterogeneity.
All the non-API LLMs are deployed in a distributed fashion on $8$ NVIDIA GeForce RTX 4090 GPUs (24 GB each). 
\begin{figure}[t] 
\centering
\includegraphics[width=0.78\linewidth]{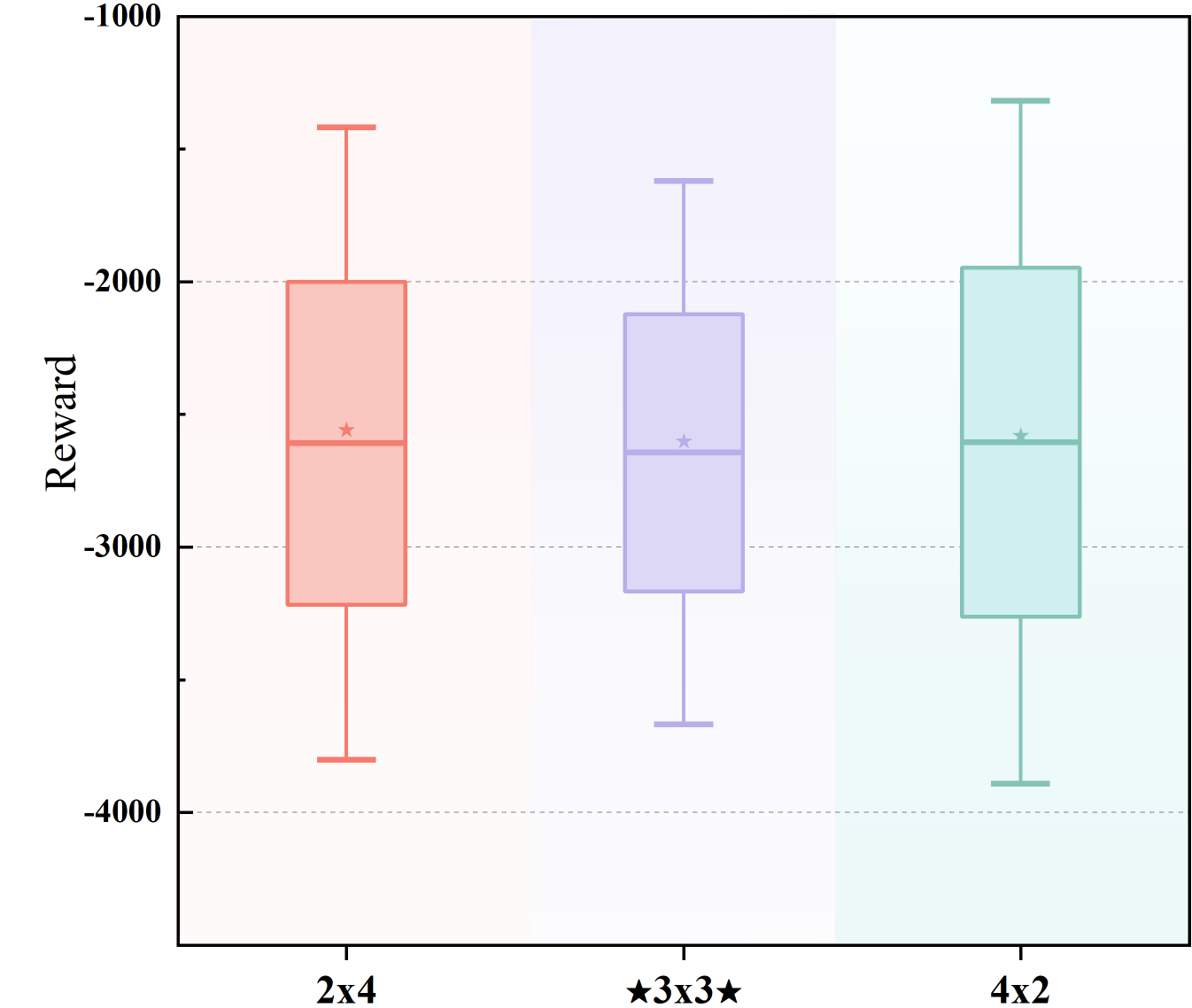}
\caption{Generalization of RALLY to varying target areas.}
\label{TargetArea}
\end{figure}
\begin{figure*}[!b]
\centering
\includegraphics[width=1.0\textwidth]{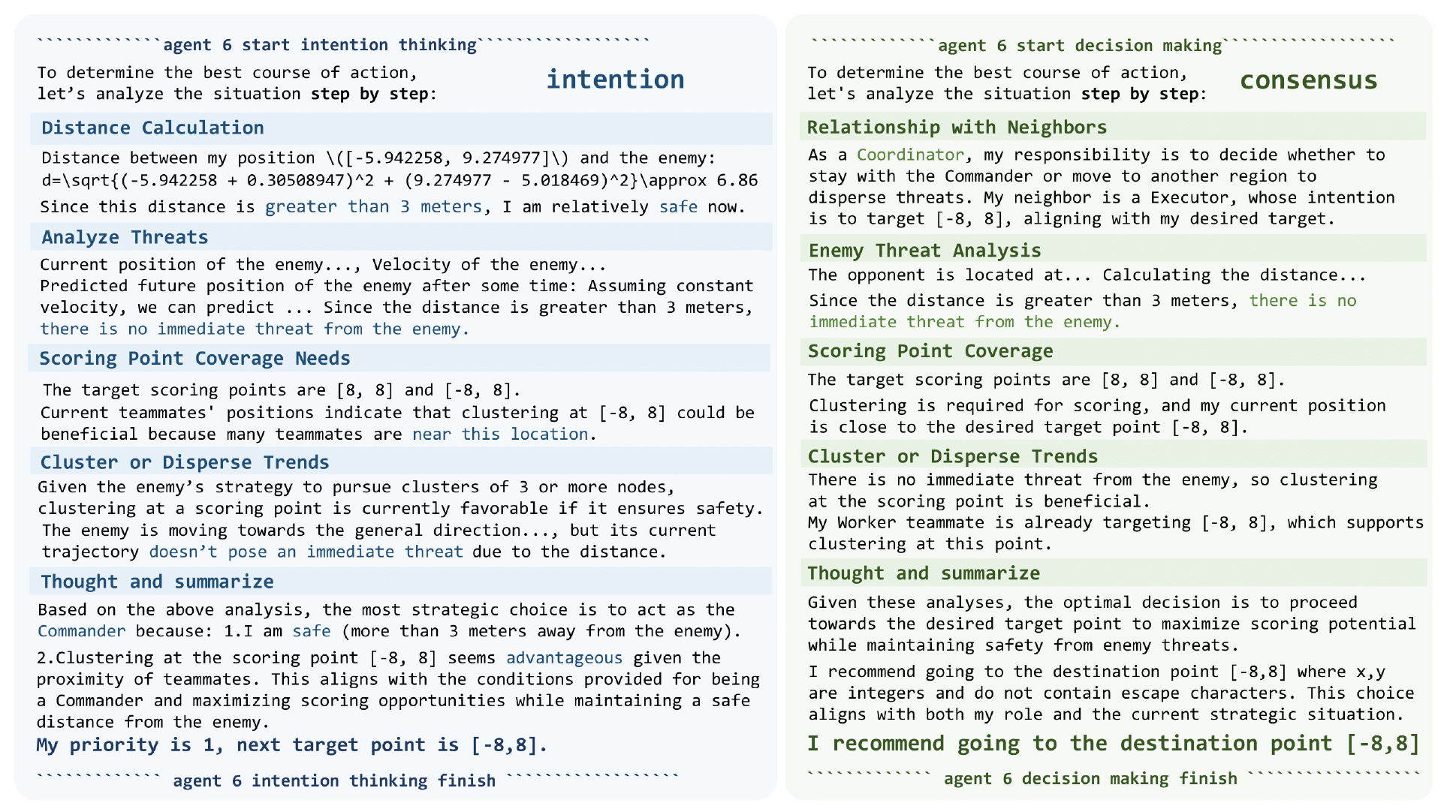}
\caption{A two-phase decision-making procedure executed by Agent \#6, where both stages are derived from the LLM outputs.} 
\label{llmoutput}
\end{figure*}
\subsection{Performance Comparison with Baselines}

\subsubsection{Overall Performance}

Fig. \ref{overall} presents the average rewards\footnote{Consistent with our previous works Ref. \cite{CIHRL2025}, due to the large negative values of interference penalty $R_{\text{e}}^{t}$, and collision penalty $R_{\text{c}}^{t}$, the reward defined in RALLY is generally negative as well.} over $30$ test episodes and demonstrates the impact of consensus mechanisms on task completion. Particularly, RALLY attains the highest mean reward and the narrowest variance distribution, indicating minimal variance, fastest convergence, and consistent task completion across various episodes. In contrast, CIHRL behaves conservatively, yielding stable but modest rewards.
CoNavGPT achieves higher average rewards than CIHRL, confirming the LLM's strong environmental understanding and decision-making, but it lacks online exploration and can get stuck in local optima.
DITTO slightly outperforms CIHRL by using LLM-based self-cognition for role and action selection, yet its greedy role choices and absence of reinforcement feedback lead to high variance and unstable consensus. 
These results demonstrate that RALLY's integration of RL–based environmental feedback and LLM-driven semantic decision making effectively guides the multi-UAV swarm to reach robust consensus and execute high-quality collaboration in complex dynamic environments.

Next, we study the convergence of the credit-based role assignment mechanism. Particularly, ``RMIX" uses the MLP-based fusion network for joint utility estimation detailed in Section \uppercase\expandafter{\romannumeral3}.\ref{sec_rmix}, while ``VDN" aggregates $Q_i$ via a simple weighted sum  $Q_{\text{tot}} = \sum_{i=1}^{n} w_i Q_i(o_i, k_i)$. Fig. \ref{RmixLoss} presents the corresponding results. 
Although both methods converge, RMIX converges faster and yields more accurate cumulative return estimates. Fig. \ref{LoraLosspic} shows the LoRA fine-tuning loss curve, while the validation starting from step $500$ indicates successful convergence. 
\subsubsection{Generalization}
As depicted in Fig. \ref{AgentNum}, we evaluate RALLY and CIHRL with changing swarm sizes from $8$, $9$, $10$ to $11$ on the DS-CEFC task. 
As the swarm grows beyond this training configuration, CIHRL's performance degrades substantially. In other words, without retraining for larger formations, the involved agents fall into ``habitual grouping'' patterns that repeatedly form the learned clusters, preventing effective coverage of additional targets and leading to significant score losses. In contrast, RALLY, which encodes the maximum permitted formation size into its prompt, dynamically forges a split consensus to avoid excessive clustering. More importantly, RALLY preserves high scoring ability even as the swarm size increases to $9$, $10$, and $11$. This clear contrast underscores RALLY's superior generalization to larger, unseen formations than CIHRL.

We further evaluate RALLY across three different target area configurations: the original $3\times3$ grid, a $2\times4$ grid, and a $4\times2$ grid. The target locations in each scenario are randomly generated, ensuring dynamic and diverse environments. As shown in Fig. \ref{TargetArea}, RALLY performs consistently well across all three scenarios, with no significant difference in reward performance, reinforcing its ability to adapt to varying environmental conditions.

\subsection{Performance Analysis of RALLY}

\subsubsection{Contribution of RMIX}
To illustrate how RMIX enhances LLM semantic decision making, we examine the differences between the initial intention and inferred consensus of Agent \#6, as shown in Fig. \ref{llmoutput}. In the initial LLM-only phase, Agent \#6 computes its distance to the adversary ($\approx 6.86$ m), and driven by ``safe aggregation'' and ``maximal scoring", greedily adopts the \texttt{Commander} role with target (-8,8). While this choice yields short-term points, it overlooks team coordination and can destabilize consensus under complex threats.   
Therefore, the role-selected policy overrides this role to \texttt{Coordinator}, shifting the agent's motive from individual dominance to supporting and scheduling. Receiving a neighbor's intent (also $(-8,8)$), the LLM then recommends $(-8,8)$ again. Moreover, the LLM also issues explicit role alignment that maintains both proximity to the \texttt{Executor} and collaboration with the \texttt{Commander}. 
This refined decision reuses the initial safety assessment and integrates multi-party communication via role mapping, yielding a more holistic trade-off between individual scoring and team consistency. By correcting the LLM's isolated ``\texttt{Commander}" bias, RALLY preserves LLM's semantic planning strengths while injecting MARL's distributed division of responsibility and information fusion. Consequently, the two-stage output achieves superior policy stability and collaborative performance in the dynamic DS-CEFC task.

\subsubsection{Impact of Role Number}
Fig. \ref{RoleNum} illustrates RALLY's reward distributions under four different role configurations, including single role (\texttt{Executor}), two roles (\texttt{Commander}–\texttt{Executor}), three roles (\texttt{Commander}–\texttt{Coordinator}–\texttt{Executor}), and four roles (\texttt{Commander}–\texttt{Coordinator}–\texttt{Executor}–\texttt{Decoy}), where the \texttt{Decoy} role is specifically designed to divert enemy attention. The single-role setup achieves the lowest mean reward ($\approx -3,000$), or severely limited performance due to the absence of task decomposition and limited exploration–coverage trade-off. Introducing a dual-role hierarchy (\texttt{Commander}–\texttt{Executor}) yields a slight increase in mean reward but greatly enlarged whiskers, indicating that overreliance on the \texttt{Commander}'s decisions amplifies fluctuation and undermines group synergy. In contrast, the three-role configuration combines high average performance with smaller variance, demonstrating that introducing a \texttt{Coordinator} role effectively mediates the semantic planning benefits from the LLM while enhancing consistency and robustness through MARL's exploration and credit-assignment. Unfortunately, adding a fourth \texttt{Decoy} role reduces average reward and inflates variance, suggesting that excessive role granularity raises coordination overhead and consensus costs, thereby detracting from overall effectiveness. 
\begin{figure}[t] 
\centering
\includegraphics[width=0.8\linewidth]{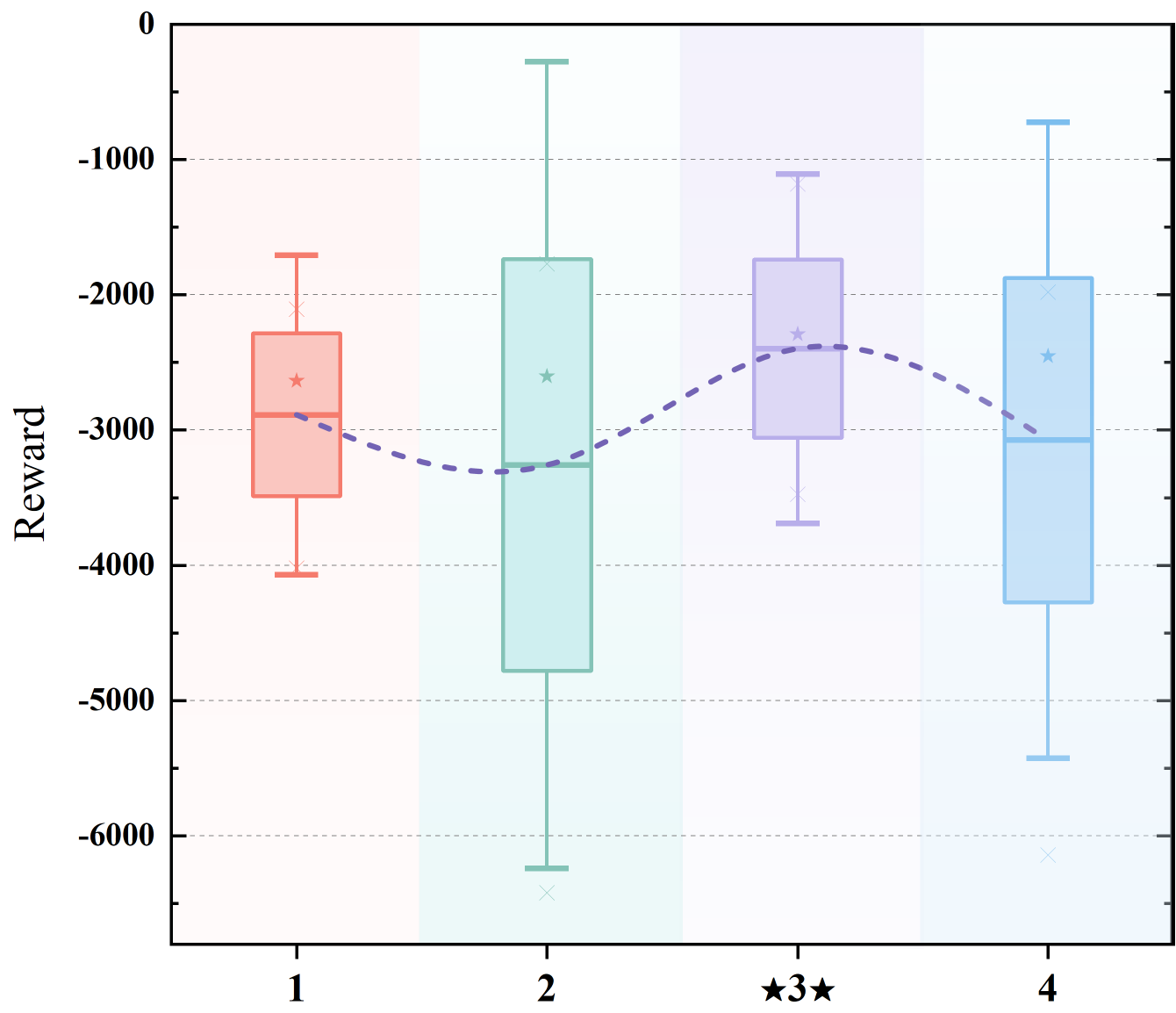} 
\caption{Impact of four different role configurations, including single role (\texttt{Executor}), two roles (\texttt{Commander}–\texttt{Executor}), three roles (\texttt{Commander}–\texttt{Coordinator}–\texttt{Executor}), and four roles (\texttt{Commander}–\texttt{Coordinator}–\texttt{Executor}–\texttt{Decoy}).}
\label{RoleNum}
\end{figure}
Overall, the three-role (\texttt{Commander} – \texttt{Coordinator} – \texttt{Executor}) architecture strikes the optimal balance between performance and stability within our two-phase LLM–MARL convergence framework, fully leveraging semantic decision making and reinforcement-driven exploration to achieve superior formation coverage and convergence stability.

\begin{figure}[t] 
\centering
\includegraphics[width=0.8\linewidth]{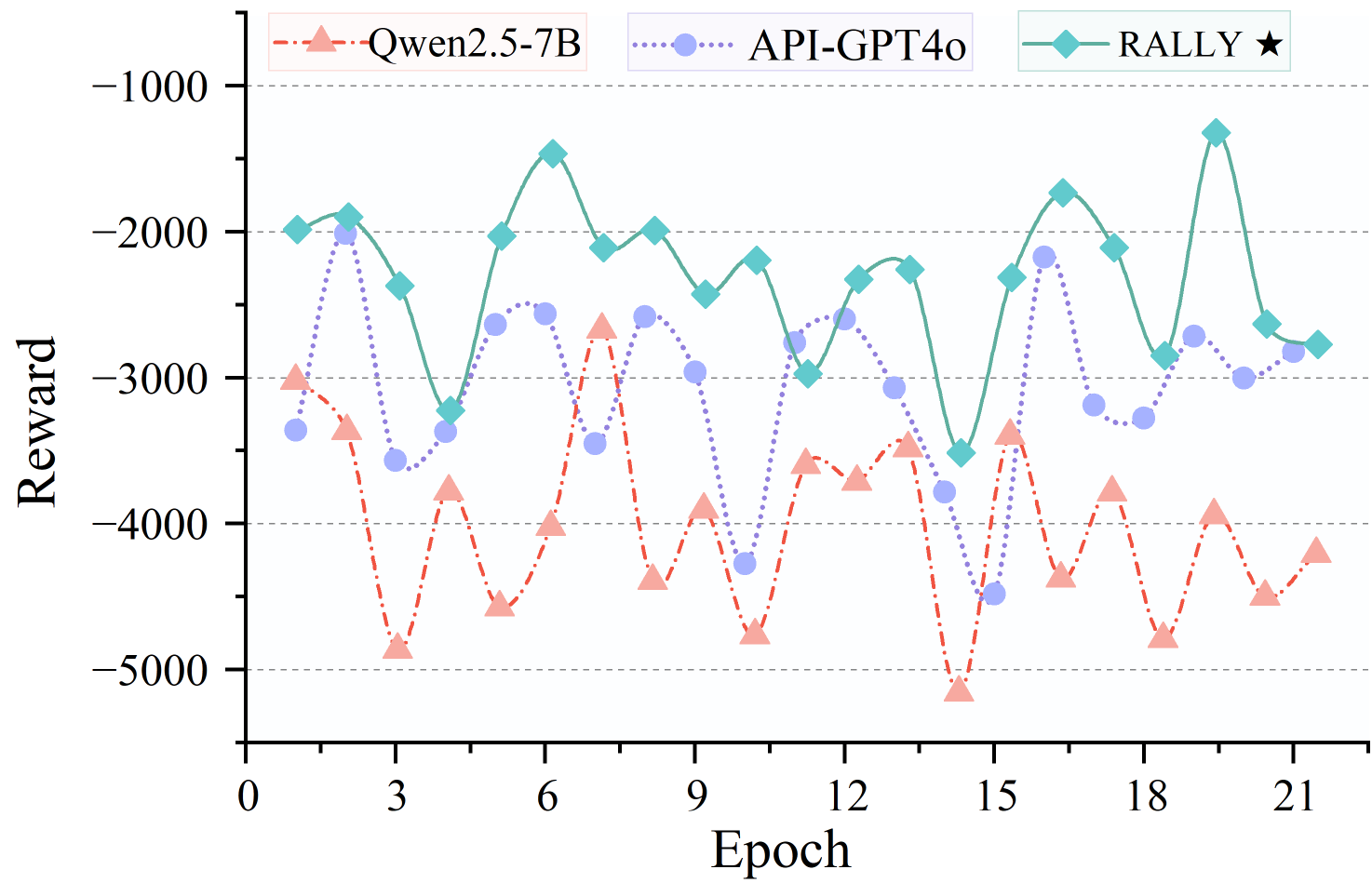}
\caption{Inference performance comparison after model fine-tuning.}
\label{FT}
\end{figure}
\begin{figure}[t] 
\centering
\includegraphics[width=0.8\linewidth]{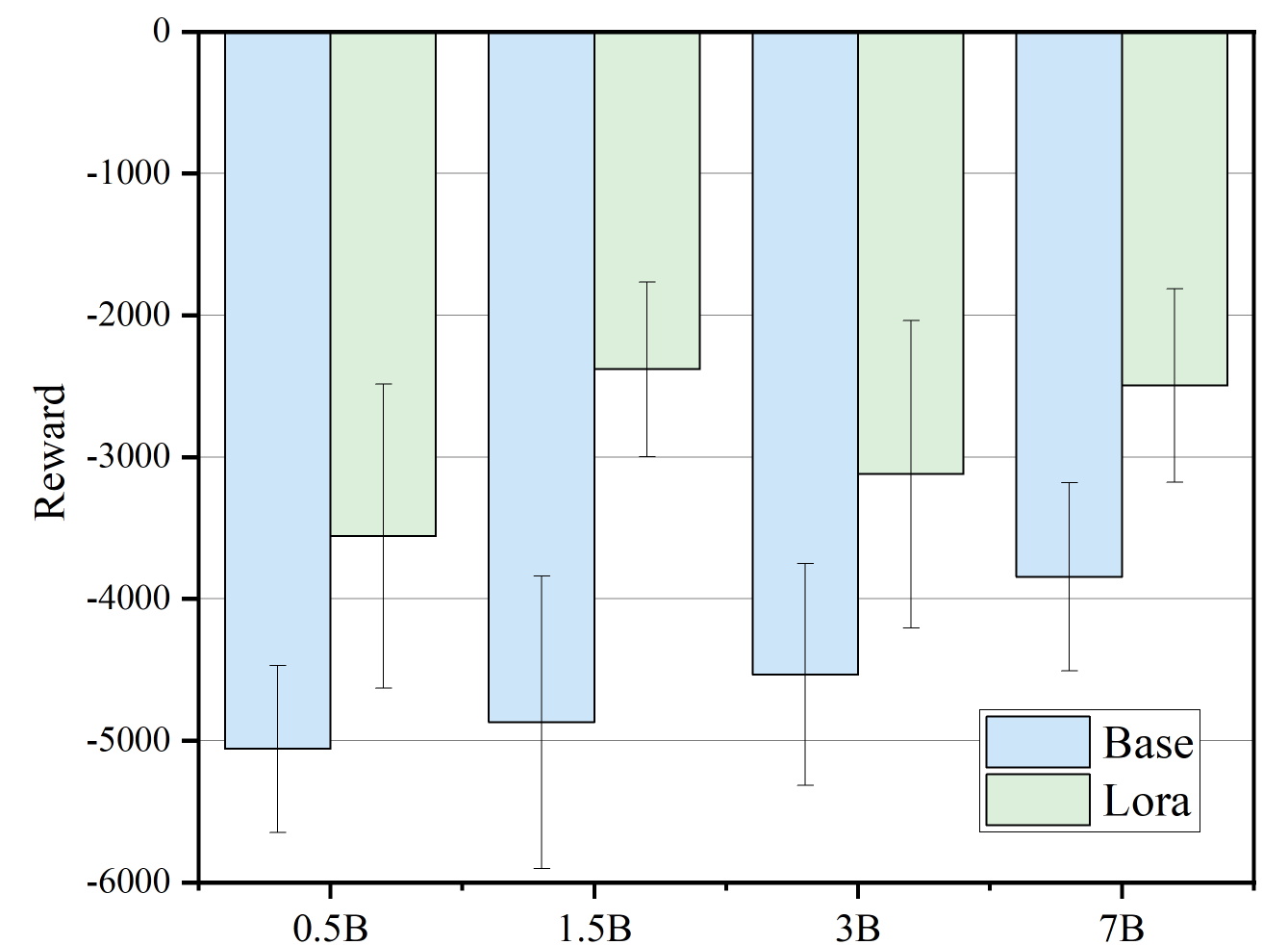}
\caption{Impact of model parameter scale.}
\label{ModelScale}
\end{figure}

\subsubsection{LLM Finetuning}
Fig. \ref{FT} present the
fine-tuning performance for RALLY and compare with other models. 
It can be observed from Fig. \ref{FT} that a non-fine-tuned Qwen2.5-$7$B base model delivers markedly lower performance, while due to its occasional illegal outputs (e.g., ``I suggest going to target point \#8, \#8.'' or ``region 8''), a direct calling of GPT-4o for online interaction possibly result in significant latency and network instability issues, degrading the performance. By contrast, RALLY, which fine-tunes Qwen2.5-7B base model under the dataset $\mathcal{B}_{\mathrm{fil}}$, harmonizes the high-quality inference of API-GPT-4o with the efficiency and stability of a smaller model.

\begin{table}[t]
    \centering
    \caption{Running performance of mainstream LLMs on an NVIDIA RTX 4090.}
    \label{tab:model}
    \renewcommand{\arraystretch}{1.25}
    \begin{tabular}{c|c|c|c}
    \hline
    \textbf{Model}  &\textbf{AIT} (s) &\textbf{MF} (GB) &\textbf{RO}  (GB)  \\ 
    \hline
    Qwen2.5-7B        & 15.39        &  15        & 15.7          \\
    Qwen2.5-3B         & 17.63        &  5.8       & 7.17          \\
    Qwen2.5-1.5B       & 14.48        &  2.9       & 4.13          \\
    Qwen2.5-0.5B       & 15.45        &  1.2       & 1.77          \\
    \bottomrule
    \end{tabular}
\end{table}

Next, we evaluate the performance after fine-tuning a LoRA-based family of Qwen2.5 models with varying parameter counts ($0.5$-B, $1.5$-B, $3$-B, and $7$-B) on the DS-CEFC task. Fig. \ref{ModelScale} summarizes the post-fine-tuning performance and demonstrates substantial performance gains. Furthermore, Table \ref{tab:model} reports the Average Inference Time (AIT), Memory Footprint (MF), and Runtime Overhead (RO) after running these models on an NVIDIA RTX 4090. It can be found that a Qwen2.5-$1.5$-B version strikes the balance by delivering robust decision quality with minimal inference overhead. 
\begin{figure}[!t]
\centering
\includegraphics[width=1.0\columnwidth]{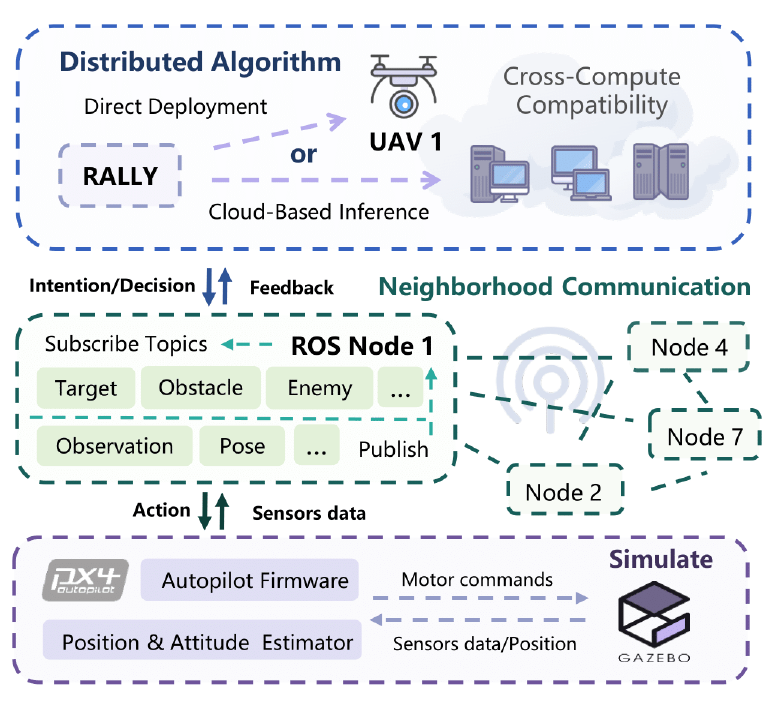}
\caption{Task overview in Gazebo Simulator for SITL.}
\label{sitl}
\end{figure}

\begin{figure*}[t!]
\centering
\includegraphics[width=1.0\textwidth]{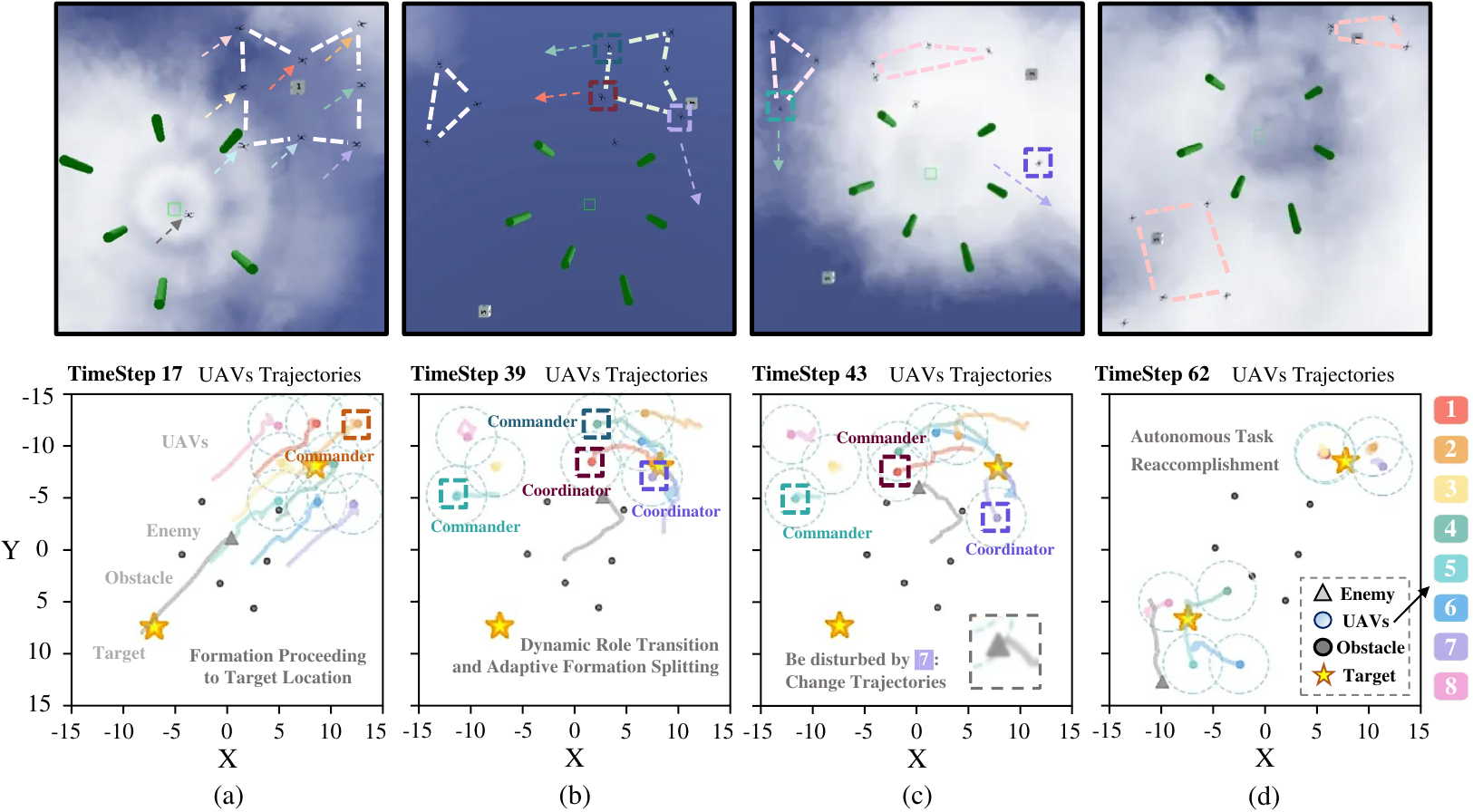}
\caption{Four typical cooperation scenarios in one episode of SITL. (a), (b), (c), and (d) stand for snapshots at high-level decision steps 17, 39, 43, and 62, respectively.}
\label{gazebo}
\end{figure*}

\begin{figure}[!t]
\centering
\includegraphics[width=1.0\columnwidth]{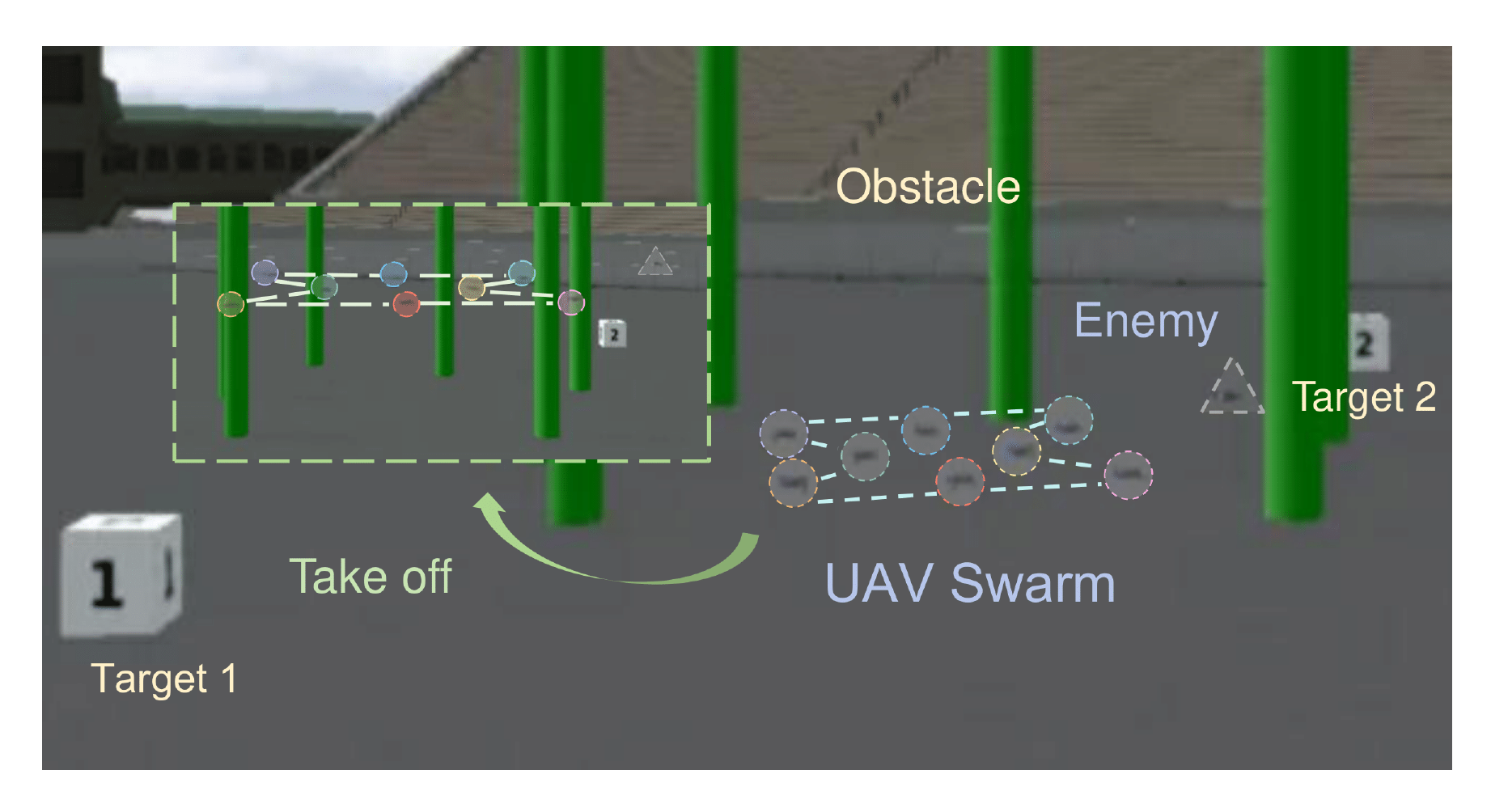}
\caption{Initialization of the Gazebo-ROS-PX4 simulation environment.}
\label{gazebointro}
\end{figure}
\subsection{Software-In-The-Loop Validation}
To assess RALLY's real-world viability, we integrate it into a ROS-based SITL environment featuring Gazebo-Classic and the PX4 autopilot, as illustrated in Fig. \ref{sitl}. Furthermore, each UAV follows standard quadrotor dynamics \cite{bresciani2008modelling}. 
Consistent with the mainstream MARL-based UAV studies, we assume fixed-altitude flight that ${u_z} \equiv 0$. 
Given the desired horizontal acceleration $\bm{u}=[{u_x}, {u_y}]$ from $\bm{\pi}_M$, the PX4 flight controller will uses a PID scheme\cite{meier2015px4} to compute thrust and angular rate commands, followed by a physics simulator which integrates the Newton–Euler equations\cite{bresciani2008modelling} to update each UAV's pose and dynamic state. 
Unlike prior open-source frameworks such as XT-Drone, our setup enforces fully distributed decision-making: each quadcopter node operates on local observations within a limited communication radius amid multiple obstacles, scoring zones, and predator–prey interactions. 
Concretely, UAV \#1 (and each peer) runs an independent off-board Python controller.
The high-level RALLY consensus module adopts the aforementioned $1.5$-B fine-tuned Qwen2.5 model, while the mid-level and low-level PX4 flight-control modules execute on a ground server to generate navigation commands.
Using MAVROS over UDP, each UAV publishes its state and sensory data (including adversary, obstacle, target, and neighbor information) and subscribes to receive pertinent updates. 
The desired target region, output by RALLY, is then converted into horizontal accelerations and broadcast to PX4 via ROS topics. 
PX4, connected to Gazebo through TCP, receives these acceleration demands, computes motor and actuator setpoints via its PID controllers, and returns updated UAV poses and sensor readings for the next simulation step. 
This tightly coupled loop in Fig. \ref{gazebointro} contributes to validating RALLY's distributed consensus and navigation performance under realistic quadcopter constraints.

Fig. \ref{gazebo} illustrates four representative consensus-building steps in the SITL environment, overlaid on each UAV's flight path derived from Gazebo-Classic. Each simulation time-step corresponds to one decision frame for consensus refinement. 
At the time-step 17 (Fig. \ref{gazebo}(a)), UAV \#2—being closest to Target \#2 and farthest from the enemy—assumes the \texttt{Commander} role and selects the upper-right scoring zone. All teammates comply and proceed toward Target \#2. 
At the time-step 39 (Fig. \ref{gazebo}(b)), the approaching enemy forces a reconfiguration: UAVs 3, 5, and 8 split off as a three-agent squad ($F_3$), with UAV \#5 promoted to \texttt{Commander} and guiding its group to Target \#1. The remaining five UAVs form a separate $F_5$ team; UAV \#4 takes on the \texttt{Commander} role and, alongside UAV \#1 acting as \texttt{Coordinator}, leads its squad toward the same goal. UAV \#7 selects to serve as \texttt{Coordinator}, deliberately positioning itself between the adversary and the cluster to divert attention and safeguard its peers. 
At the time-step 43 (Fig. \ref{gazebo}(c)), UAV \#7's unexpected directional shift effectively confuses the enemy's pursuit vector, causing it to veer off and granting the other drones a clear corridor to bypass the threat and reorient toward the next target. 
Finally, at the time-step 62 (Fig. \ref{gazebo}(d)), after successive rounds of LLM-driven intent generation and RMIX-guided role reassignment, both sub-clusters successfully evade the enemy and complete coverage of their respective scoring regions. 
This sequence confirms RALLY's ability to orchestrate dynamic role adaptation and robust distributed consensus in complex, adversarial scenarios.
\section{Conclusion and Future Works}
\label{sec:conclusion}
This paper introduces RALLY, an advanced LLM-MARL-integrated framework for UAV swarm control that combines LLM-driven semantic reasoning with MARL-based exploration.
By integrating autonomous intent understanding, dynamic role assignment, and decentralized consensus building, RALLY enables each UAV to interpret local observations, select functional roles, and collaboratively decide on navigation goals under communication constraints.
We validate RALLY via the MPE simulation environment and a high-fidelity Gazebo-ROS-PX4 SITL platform, demonstrating its superior task completion, collaborative effectiveness, and generalization compared to existing methods in the DS-CEFC scenario. 

Looking forward, we plan to optimize the lightweight LLM for faster inference and reduced communication latency, and enhance system robustness through more large-scale settings and advanced communication strategies.
In addition, we will address the possible local optima issue in CoT reasoning by exploring test-time training strategies and diversifying reasoning paths to improve reasoning diversity, reduce convergence to suboptimal solutions, and enhance generalization.
We also aim to investigate multimodal fusion and theoretical guarantees for rapid semantic consensus in larger UAV swarms. 
These efforts will pave the way for deploying intelligent, collaborative UAV systems in complex, resource-constrained missions.



\section*{Appendix}
In the Appendix, we provide the detailed prompts in Fig. \ref{AppendixPrompt} and give the reasoning sensitivity in Fig. \ref{llminput}.
\begin{figure*}[!tb]
\centering
\includegraphics[width=1\textwidth]{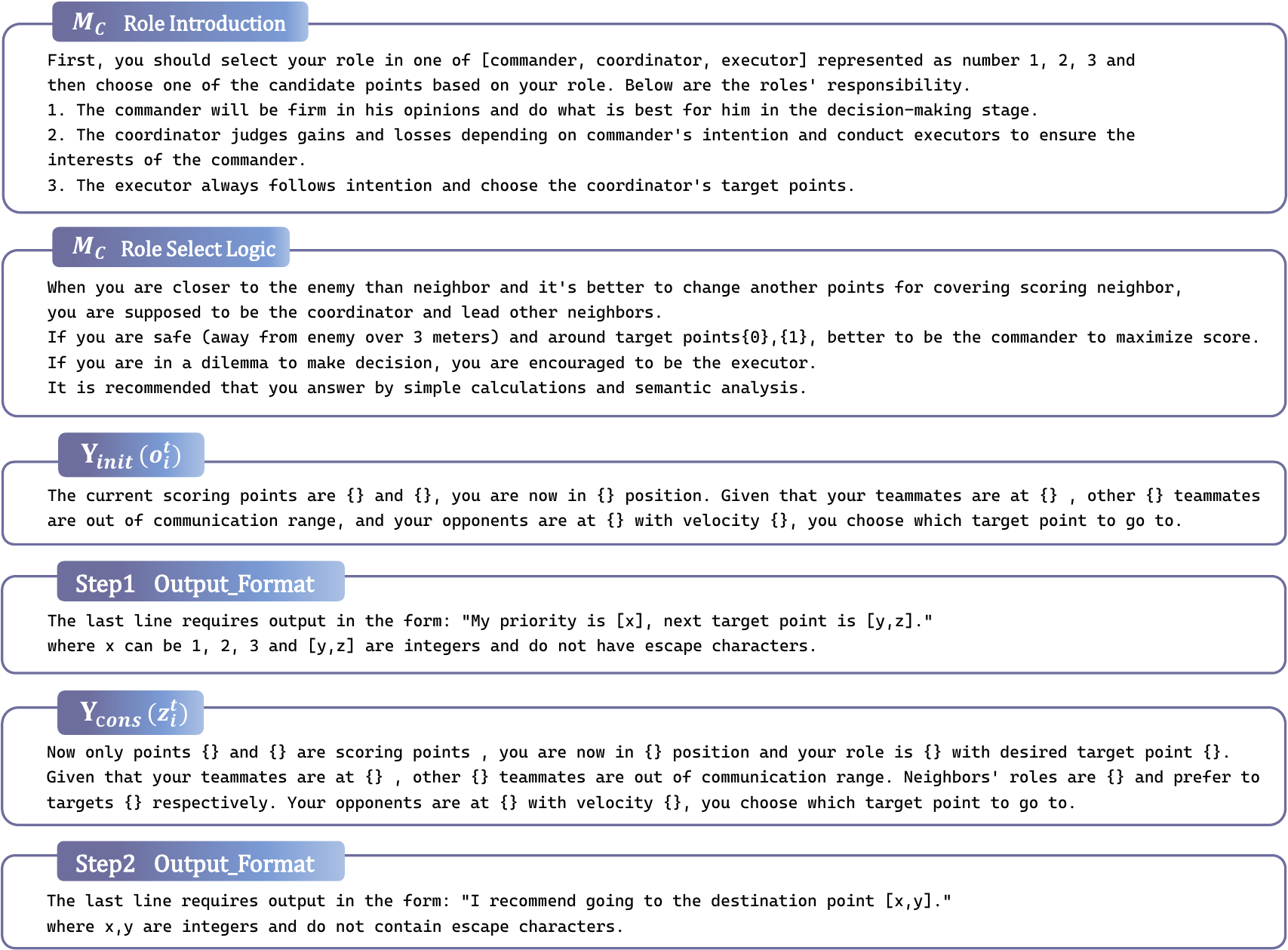}
\caption{Part of the prompts used in RALLY.}
\label{AppendixPrompt}
\end{figure*}

\begin{figure*}[!t]
\centering
\includegraphics[width=.85\textwidth]{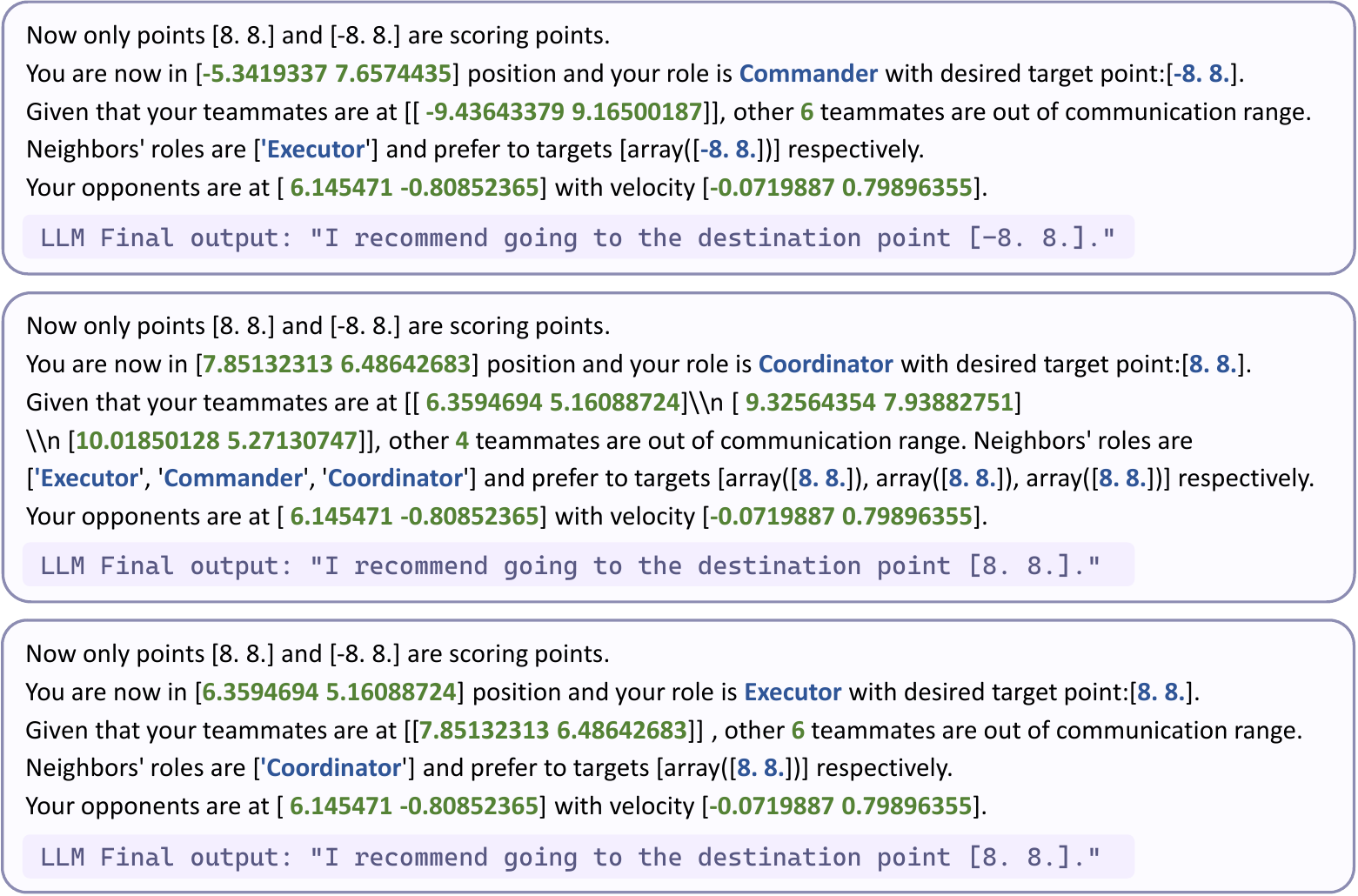}
\caption{Specific prompts and corresponding responses by different agents running their distributed models in parallel. For brevity, we have omitted the repeated structural prompts and CoT reasoning, while retaining only the key data-containing components of the input and output.}
\label{llminput}
\end{figure*}
\end{document}